\documentclass[aps,prd,preprint,amsmath,amssymb,nofootinbib,floatfix]{revtex4-2}

\usepackage{graphicx}
\usepackage{amsthm}
\usepackage{hyperref}
\usepackage{booktabs}



\newtheorem{theorem}{Theorem}

\begin{document}

\title{A minimization theorem for the Koide ratio and its Standard Model
       calibration}

\author{K.~H\"ubner}
\thanks{Corresponding author: \href{mailto:k.a.huebner@web.de}{k.a.huebner@web.de}}
\thanks{ORCID: \href{https://orcid.org/0009-0001-6425-9527}{0009-0001-6425-9527}}
\affiliation{Independent researcher}

\date{\today}

\begin{abstract}
The charged-lepton Koide relation remains a striking empirical regularity in Standard-Model flavor data. We prove that for any positive mass set with Koide ratio $Q_0$, the one-particle extension $Q(m_1,\ldots,m_N,x)$ has a unique global minimum $Q_\text{min}=Q_0/(1+Q_0)$ at $m^*=\bigl[(\sum_i m_i)/(\sum_i \sqrt{m_i})\bigr]^2$. This exact kinematic result defines a unique extension benchmark. For the measured charged leptons it gives $m_*^\ell = 1.255\,34(16)\,\text{GeV}$ and $Q_{4,\min}^{\mathrm{exp}} = 0.399\,997\,8(43)$; in the ideal Koide limit $Q_\ell^{\mathrm{K}}=2/3$, the corresponding minimum is exactly $2/5$. In the effective-participant language $N_{\mathrm{eff}}\equiv 1/Q$, the optimal one-particle extension increases $N_{\mathrm{eff}}$ by one, while the equal-$k$ multiplet extension increases it by $k$. The one-particle $N_{\mathrm{eff}}$ profile is exactly Lorentzian in a dimensionless share-mismatch coordinate $u$, which we interpret kinematically rather than dynamically. Using charged-lepton pole masses with the PDG~2024 own-scale $\overline{\text{MS}}$ charm mass gives $Q(e,\mu,\tau,c)=0.400\,002\,5(64)$, i.e. $11.7\,\text{ppm}$ above the measured-input benchmark and $6.2\,\text{ppm}$ above $2/5$. This intentionally mixed-definition comparison is treated only as a phenomenological coincidence. To calibrate it within a stated benchmark class, we perform an exhaustive common-scale scan over non-neutrino Standard Model 2-body and 3-body seeds with one added mass. The charged-lepton-plus-charm continuation ranks $33/12{,}720$ in the raw trial set, $24/2{,}640$ after collapsing repeated scale realizations, and $6/756$ within the fermion-only collapsed subset. We present the charm case as an empirically calibrated example of the theorem, not as a dynamical flavor model.
\end{abstract}

\maketitle

\section{Introduction}
\label{sec:intro}

In 1982 Koide observed that the three charged lepton masses satisfy,
to remarkable accuracy, the dimensionless relation~\cite{Koide1982,Koide1983}
\begin{equation}
  Q \;\equiv\;
  \frac{m_e + m_\mu + m_\tau}
       {(\sqrt{m_e}+\sqrt{m_\mu}+\sqrt{m_\tau})^2}
  \;=\; \frac{2}{3}\,,
  \label{eq:koide}
\end{equation}
with a relative deviation of $9.3\,\text{ppm}$ using PDG~2024 lepton
masses~\cite{PDG2024}.
Because $Q$ is homogeneous of degree zero, it probes
the relative placement of masses rather than their overall scale.

No symmetry of the Standard Model predicts Eq.~\eqref{eq:koide}.
Foot reformulated the charged-lepton relation geometrically in terms of
the angle between the $\sqrt m$ vector and the democratic
direction~\cite{Foot1994}, while Sumino proposed a family-gauge
mechanism that would protect the pole-mass relation against QED
radiative corrections~\cite{Sumino2009a,Sumino2009b}, a proposal that
is intrinsically tied to additional family-gauge dynamics not without
phenomenological constraints. Various Koide-like
extensions to running masses, quarks, neutrinos, mixing-dependent
pseudo-masses, and generalized Koide functionals have also been
explored~\cite{LiMa2005,GerardGoffinetHerquet2006,LiMa2006,
XingZhang2006,Rodejohann2011,Kartavtsev2011,Gao2015,Rivero2005,
Brannen2010,Varma2026}, with mixed phenomenological success. Because any a priori probability estimate for
the charged-lepton coincidence depends strongly on the assumed prior over
positive mass tuples, we do not assign a formal global significance to
Eq.~\eqref{eq:koide} here; instead, the only statistical calibration in
this paper is the explicit look-elsewhere scan of
Sec.~\ref{sec:scan}.

This paper addresses a narrower question than a full flavor model:
given a fixed positive mass multiplet, which extra mass minimizes the
extended Koide ratio, and what does that imply for the observed charged
leptons? We show that the answer is exact and model-independent: for any
positive base set, $Q(m_1,\ldots,m_N,x)$ has a unique global minimum at
$x=m^*$, and the minimum value is $Q_0/(1+Q_0)$. Applied to the charged
leptons, the theorem yields the measured-input benchmark
$Q_{4,\min}^{\mathrm{exp}}$ at the scale $m_*^\ell$; in the idealized
world where the lepton Koide value is exactly $2/3$, the corresponding
four-body minimum is exactly $2/5$.
Accordingly, we reserve
$Q_\ell^{\mathrm{exp}} = Q(e,\mu,\tau)=0.666\,661(12)$ for the measured
charged-lepton ratio and $Q_\ell^{\mathrm{K}} = 2/3$ for the idealized
exact Koide value.

The theorem itself is the main new result of the paper. Earlier Koide
literature has discussed empirical extensions to quarks, neutrinos,
running masses, mixing-dependent pseudo-masses, and generalized
functionals~\cite{LiMa2005,
GerardGoffinetHerquet2006,LiMa2006,XingZhang2006,Rodejohann2011,
Kartavtsev2011,Gao2015,Rivero2005,Brannen2010,Varma2026}; the point here is more
specific. Rather than proposing another extension ansatz, we derive the
unique extension point selected by minimizing the Koide ratio once a base
tuple is held fixed.

Two reformulations of the same theorem turn out to be useful and are
also new in this Koide-specific context. First, an equal-amplitude
multiplet extension: when $k$ additional masses are added with a common
amplitude $r=\sqrt{x}$, the minimum is reached at the same
$r^*=R_2/R_1$ as in the one-particle case and takes the closed form
$Q_{\min}^{(k)}=Q_0/(1+kQ_0)$. Second, an
\emph{effective-participant} reading
$N_{\mathrm{eff}}\equiv 1/Q$ in which the one-particle theorem
collapses to the additive identity
$N_{\mathrm{eff},\max}=N_{\mathrm{eff},0}+1$ and its multiplet
generalization to $N_{\mathrm{eff},\max}^{(k)}=N_{\mathrm{eff},0}+k$;
in particular the idealized lepton seed $N_{\mathrm{eff},0}^{\mathrm{K}}=3/2$
maps to the four-body benchmark $N_{\mathrm{eff},\max}^{\mathrm{K}}=5/2$.
This reading exposes that the kinematic content of the Koide ratio is a
participation count rather than an energy scale.
A third reformulation reparameterizes the one-particle extension curve
in an exact Lorentzian coordinate
$u(x)=(p-Q_{N+1,\min})/\sqrt{Q_{N+1,\min}(1-Q_{N+1,\min})}$, where
$p=r/(R_1+r)$ is the normalized amplitude share of the extender. This is
an exact reparameterization of the kinematic curve, not a dynamical
resonance, and gives the cleanest way to quote how close any candidate
continuation sits to the theorem-selected peak. The same minimum is
equivalently encoded geometrically in the Foot angle~\cite{Foot1994}
between the $\sqrt m$ vector and the democratic direction.

The phenomenological question is then whether any Standard Model mass
lies close to that theorem-selected extension point. Using
charged-lepton pole masses together with the short-distance charm mass
$m_c(m_c)$, the PDG~2024 charm input lies $1.4\%$ above $m_*^\ell$ and
gives $Q(e,\mu,\tau,c)=0.400\,002\,5$, which is
$11.7\,\text{ppm}$ above $Q_{4,\min}^{\mathrm{exp}}$ and
$6.2\,\text{ppm}$ above the idealized value $2/5$.
Because that comparison mixes lepton pole masses with a short-distance
quark mass, and because other conventional charm-mass choices move the
result away by $10^3$--$10^4\,\text{ppm}$, the observation must be
treated as a mixed-definition phenomenological coincidence rather than a
scheme-independent relation.

The main empirical task of this paper is therefore to calibrate that
coincidence rather than to promote it to a new flavor law. We perform an
exhaustive scan over all non-neutrino SM
2-body and 3-body seeds with one extra particle, evaluating any
quark-containing trial in a common-scale $\overline{\text{MS}}$ setup.
This shows whether the charged-lepton-plus-charm continuation is merely
one entry in a broad pool of accidental near-minima, or whether it ranks
high in this explicit benchmark ensemble. Charm is highlighted not
because it is the numerically best Standard Model extension, but because
it extends the already distinguished charged-lepton Koide triplet.

That benchmark ensemble is itself structured and partly a posteriori: it
restricts attention to non-neutrino SM masses, to one-particle
extensions of 2-body and 3-body seeds, and to the conventional common
scales $\mu\in\{2\,\text{GeV},m_c,m_b,m_Z,m_t\}$ for quark-containing
trials. The resulting scan fractions are therefore best read as rank
statistics within this chosen comparison class, not as prior
probabilities for flavor models or spectra.

The paper is organized as follows. Section~\ref{sec:theorem} proves the
exact one-particle minimization theorem and its equal-amplitude
multiplet extension. Section~\ref{sec:neff_lorentzian} develops the
concentration-measure, effective-participant ($N_{\mathrm{eff}}=1/Q$),
and exact Lorentzian-coordinate reformulations of the extension problem,
including the idealized
$Q_\ell^{\mathrm{K}}=2/3 \to Q_{4,\min}^{\mathrm{K}}=2/5$
(equivalently $N_{\mathrm{eff},0}^{\mathrm{K}}=3/2
\to N_{\mathrm{eff},\max}^{\mathrm{K}}=5/2$) corollary.
Section~\ref{sec:charm} then applies that setup to the measured charged
leptons and quantifies the charm comparison.
Section~\ref{sec:scheme} examines the scheme dependence of that
mixed-definition input choice. Section~\ref{sec:foot-angle} recasts the
charm discussion in Foot's angular language and quantifies the
theorem-selected continuation angle. Section~\ref{sec:scan} gives
the look-elsewhere calibration from the exhaustive 2-body and 3-body scan,
where the charged-lepton-plus-charm continuation ranks $24/2{,}640$ in
the full collapsed leaderboard and $6/756$ within the fermion-only
sub-ensemble.
Section~\ref{sec:conclusions} summarizes the outcome and its limitations.

We use natural units throughout.
Quark masses are $\overline{\text{MS}}$ values from
PDG~2024~\cite{PDG2024} unless otherwise stated.
Lepton masses are on-shell (pole) masses, which is also the empirical
input convention in which the charged-lepton Koide relation is usually
quoted. When charm is added, we use the short-distance quantity
$m_c(m_c)$ as the default reference because a charm pole mass is not a
comparably clean observable. The resulting lepton-pole plus
$\overline{\text{MS}}$-charm comparison is therefore a practical
mixed-definition convention rather than a model-derived common scheme.
Unless stated otherwise, quoted ppm deviations from a reference value
$X_\text{ref}$ mean the signed relative difference
$10^6(X-X_\text{ref})/X_\text{ref}$; when only closeness matters, we
quote the magnitude. Phrases such as ``$X$ ppm above $X_\text{ref}$''
are used as a verbal shorthand for a positive value of the same signed
relative difference, not as a separate inequality statement.
Quoted uncertainties are propagated at leading order from the stated input
mass errors using diagonal Gaussian error propagation. We neglect input
correlations throughout and use the resulting numbers only as local
sensitivity estimates rather than as full statistical intervals.
Neutrinos are excluded from the scan below because only mass splittings,
not a unique absolute-mass spectrum, are experimentally established; any
neutrino leaderboard would therefore build in additional assumptions that
are not needed for the present charged-lepton benchmark.

\section{The \texorpdfstring{$Q$}{Q}-minimum theorem}
\label{sec:theorem}

\begin{theorem}
  Let $m_i>0$ be particle masses and let $r_i = \sqrt{m_i}$ for
  $i=1,\ldots,N$. Define
  \[
    R_1 \equiv \sum_{i=1}^N r_i\,,
    \qquad
    R_2 \equiv \sum_{i=1}^N r_i^2 = \sum_{i=1}^N m_i\,,
    \qquad
    Q_0 \equiv Q_N(\{m_i\}) = \frac{R_2}{R_1^2}\,.
  \]
  Here $Q_0$ is simply the Koide ratio of the unextended base tuple.
  For an added mass $x$ written as $r=\sqrt{x}>0$, consider the extended ratio
  \begin{equation}
    Q_{N+1}(r) = \frac{R_2 + r^2}{(R_1+r)^2}\,.
    \label{eq:q_extension_r}
  \end{equation}
  Then:
  \begin{enumerate}
    \item[(i)] $Q_{N+1}$ has a unique global minimum at
      \begin{equation}
        r^* \;=\; \frac{R_2}{R_1}
        \;=\; Q_0 R_1\,,
        \qquad
        m^* \;=\; (r^*)^2
        \;=\; \left(\frac{\sum_i m_i}{\sum_i \sqrt{m_i}}\right)^{\!2}\,,
        \label{eq:mstar}
      \end{equation}
    \item[(ii)] The minimum value is
      \begin{equation}
        Q_\text{min}
        \;=\; \frac{Q_0}{1 + Q_0}\,.
        \label{eq:qmin}
      \end{equation}
  \end{enumerate}
\end{theorem}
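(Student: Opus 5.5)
The proof is a one-variable calculus argument on $r\in(0,\infty)$, which I will then back up with an algebraic identity that gives global minimality directly. First, differentiate Eq.~\eqref{eq:q_extension_r}:
\[
  Q_{N+1}'(r) = \frac{2r(R_1+r) - 2(R_2+r^2)}{(R_1+r)^3}
  = \frac{2(R_1 r - R_2)}{(R_1+r)^3}\,.
\]
Since $R_1>0$, the denominator is positive. The numerator is linear in $r$ with positive slope and vanishes only at $r^*=R_2/R_1=Q_0R_1$, which is positive. So $Q_{N+1}$ is strictly decreasing on $(0,r^*)$ and strictly increasing on $(r^*,\infty)$, and $r^*$ is the unique global minimizer on the open half-line. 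Because $r=\sqrt{x}$ is a bijection of $(0,\infty)$ onto itself, this gives $m^*=(r^*)^2=(\sum_i m_i/\sum_i\sqrt{m_i})^2$.

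For (ii), substitute $r^*$. The numerator is $R_2+R_2^2/R_1^2 = R_2(R_1^2+R_2)/R_1^2$. The denominator is $(R_1+R_2/R_1)^2=(R_1^2+R_2)^2/R_1^2$. Their ratio is $R_2/(R_1^2+R_2)$, and dividing numerator and denominator by $R_1^2$ gives $Q_0/(1+Q_0)$.

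As a cross-check that avoids calculus, I would also verify the identity
\[
  Q_{N+1}(r)-\frac{Q_0}{1+Q_0} = \frac{(R_1 r - R_2)^2}{(R_1^2+R_2)(R_1+r)^2}\,.
\]
To check it, cross-multiply and expand. With $Q_0/(1+Q_0)=R_2/(R_1^2+R_2)$, the left side times $(R_1^2+R_2)(R_1+r)^2$ equals $(R_2+r^2)(R_1^2+R_2)-R_2(R_1+r)^2 = R_1^2r^2 - 2R_1R_2 r + R_2^2$. The right side is manifestly nonnegative and vanishes only at $r=r^*$. This proves (i) and (ii) simultaneously, including uniqueness and globality.

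I do not expect a genuine obstacle. The only points needing care are bookkeeping. One must confirm that $r^*>0$, so the minimizer lies in the admissible domain, and note that the infimum is attained rather than approached at a boundary; indeed $Q_{N+1}\to Q_0$ as $r\to0$ and $Q_{N+1}\to 1$ as $r\to\infty$, both of which exceed $Q_0/(1+Q_0)$. The square-completion identity handles all of these at once, so I would present it as the main proof and keep the derivative computation as motivation.
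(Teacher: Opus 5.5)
Your proof is correct. The calculus half coincides with the paper's own proof: the same derivative $2(R_1r-R_2)/(R_1+r)^3$, the same sign change at $r^*$, and an equivalent substitution. The paper inserts $R_2=Q_0R_1^2$ and $r^*=Q_0R_1$ directly, while you pass through $R_2/(R_1^2+R_2)$.

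What is genuinely different is the completed-square identity you propose to make the main argument, $Q_{N+1}(r)-Q_0/(1+Q_0)=(R_1r-R_2)^2/[(R_1^2+R_2)(R_1+r)^2]$. Its expansion checks out.

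\begin{itemize}
\item It gives existence, uniqueness, globality and the minimum value in one stroke, with no monotonicity bookkeeping.
\item It is exact rather than local. Write $r=r^*+\epsilon$ and use $R_1^2+R_2=R_1(R_1+r^*)$; the excess is then $R_1\epsilon^2/[(R_1+r^*)(R_1+r)^2]$. This is the paper's near-minimum expansion in Sec.~\ref{sec:charm}, with the $O(\epsilon^3)$ remainder in closed form.
\item It is the same identity the paper only reaches later, in Eq.~\eqref{eq:q_quadratic_in_share} of Sec.~\ref{sec:neff_lorentzian}. Substituting $p_X=r/(R_1+r)$ and $1-Q_{\min}=R_1^2/(R_1^2+R_2)$ turns $(p_X-Q_{\min})^2/(1-Q_{\min})$ into your right-hand side. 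Your route would therefore make the theorem and the Lorentzian reparametrization a single computation.
\item The same trick handles the paper's equal-$k$ multiplet reduction: $\widetilde Q_{N+k}(S)-Q_0/(1+kQ_0)=(R_1S-kR_2)^2/[k(R_1^2+kR_2)(R_1+S)^2]$.
\end{itemize}

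The paper's derivative route, in turn, shows the decrease-then-increase shape of the curve directly. It also supplies the second derivative $2R_1/(R_1+r^*)^3$ that the charm discussion reuses.

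Your explicit note that $r\mapsto r^2$ is a bijection of $(0,\infty)$, which transfers the minimizer from $r$ to $x$, fills a step the paper leaves tacit.
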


\begin{proof}
  Differentiating Eq.~\eqref{eq:q_extension_r} with respect to $r$ gives
  \begin{equation}
    \frac{\mathrm{d} Q_{N+1}}{\mathrm{d}r}
    = \frac{2(R_1r-R_2)}{(R_1+r)^3}
    = \frac{2R_1(r-r^*)}{(R_1+r)^3}\,.
    \label{eq:dqdr}
  \end{equation}
  In particular, $\mathrm{d}Q_{N+1}/\mathrm{d}r = 0$ at $r=r^*$.
  Since $R_1>0$ and $R_1+r>0$, the derivative is negative for $0<r<r^*$
  and positive for $r>r^*$, so $r^*$ is the unique global minimum.
  Substituting $R_2 = Q_0R_1^2$ and $r^*=Q_0R_1$ gives
  \begin{align}
    Q_\text{min}
    &= \frac{R_2 + (r^*)^2}{(R_1+r^*)^2}
     = \frac{Q_0R_1^2 + Q_0^2R_1^2}{R_1^2(1+Q_0)^2} \nonumber\\
     &= \frac{Q_0 + Q_0^2}{(1+Q_0)^2}
      = \frac{Q_0}{1+Q_0}\,. \qed
  \end{align}
\end{proof}

For any positive $(N+1)$-tuple, the Cauchy--Schwarz inequality gives
$R_1^2 \le (N+1)R_2$ and hence $Q_{N+1} \ge 1/(N+1)$.
For any positive $(N+1)$-tuple with $N+1\ge 2$, one has
$R_1^2 = R_2 + 2\sum_{i<j} r_i r_j > R_2$, so $Q_{N+1}<1$; for the
trivial one-mass case one instead has $Q_1=1$.
The theorem above does not require the $m_i$ to be distinct; degenerate
base masses ($m_i=m_j$ for some $i\ne j$) are allowed, in which case the
$r_i$-weighted barycenter $r^*$ remains a well-defined positive number.
With the first $N$ masses fixed, however, an extension generally cannot
reach the absolute lower bound $1/(N+1)$; its sharp lower bound is instead
the theorem value $Q_\text{min}=Q_0/(1+Q_0)$. Since every positive
$N$-tuple satisfies
$Q_0\ge 1/N$, one has
\begin{equation}
  Q_\text{min} = \frac{Q_0}{1+Q_0} \ge \frac{1}{N+1}\,.
\end{equation}
Equality holds only in the trivial equal-mass limit.

Three brief readings of $r^*=R_2/R_1$ are useful. First,
$R_1r^*-R_2=\sum_i r_i(r^*-r_i)=0$ shows that $r^*$ is the
$r_i$-weighted barycenter of the base $r_i$ on the positive line.
Equivalently, if one regards the $r_i$ as a weighted sample with weights
proportional to $r_i$, then $r^*$ is their weighted mean and the minimum
condition is the vanishing of the corresponding weighted first central moment.
Because $r^*$ is a positive weighted mean of the base mass amplitudes
$r_i=\sqrt{m_i}$, it necessarily lies between the smallest and largest
$r_i$; equivalently, the minimizing mass $m^*=(r^*)^2$ lies between the
smallest and largest base masses.
Second, in Foot's democratic-vector language~\cite{Foot1994}, the allowed extension line
$W(r)=(r_1,\ldots,r_N,r)$ reaches its least-deviating alignment with the
democratic direction precisely at $r=r^*$. We return to this geometric
interpretation in Sec.~\ref{sec:foot-angle}.
Third, the theorem also admits an exact concentration-measure reading.
Because $Q_0=\sum_i p_i^2=\mathrm{tr}(\rho^2)$ for the normalized
amplitude weights $p_i=r_i/R_1$, it can be read as a purity or inverse
participation measure on the probability simplex.
We return to that interpretation, and to the associated effective-participant
coordinate $N_{\mathrm{eff}}\equiv 1/Q$, in Sec.~\ref{sec:neff_lorentzian}.

The same minimizer also controls the equal-multiplet extension problem:
if one adds $k$ new masses and minimizes over their corresponding
$r$-variables jointly, the global minimum occurs when they are all equal
to the same $r^*$, giving $Q_{\min}^{(k)} = Q_0/(1+kQ_0)$.
To see this, let the new mass amplitudes be $s_1,\ldots,s_k>0$ and define
$S\equiv \sum_{a=1}^k s_a$. For fixed $S$, one has
$\sum_a s_a^2 \ge S^2/k$, with equality only when all $s_a=S/k$.
The multi-particle problem therefore reduces to minimizing
\begin{equation}
  \widetilde Q_{N+k}(S) = \frac{R_2 + S^2/k}{(R_1+S)^2}
\end{equation}
over $S>0$. Differentiating gives
\begin{equation}
  \frac{\mathrm{d}\widetilde Q_{N+k}}{\mathrm{d}S}
  = \frac{2(R_1S-kR_2)}{k(R_1+S)^3}\,,
\end{equation}
so the unique minimum occurs at $S^*=kR_2/R_1$, i.e. at
$s_1=\cdots=s_k=R_2/R_1=r^*$.
Substituting back yields
\begin{equation}
  Q_{\min}^{(k)}
  = \frac{R_2 + k(r^*)^2}{(R_1+kr^*)^2}
  = \frac{Q_0}{1+kQ_0}\,.
\end{equation}
Thus repeated exact minimization keeps
returning to the same optimal scale and approaches the large-$N$
lower bound $Q_N \sim 1/N$ from above.

The same statement also applies to generalized Koide functionals of the
type considered in Ref.~\cite{Varma2026}. For
\begin{equation}
  R_{k,q}(m_i)
  \equiv
  \frac{\bigl(\sum_i m_i^{1/k}\bigr)^{kq}}
       {\bigl(\sum_i m_i^{1/(2k)}\bigr)^{2kq}}\,,
\end{equation}
set $z_i=m_i^{1/k}$, so that
$R_{k,q}=Q(z_i)^{kq}$. For $kq>0$, minimizing $R_{k,q}$ over one added
positive mass is therefore equivalent, by monotonicity, to minimizing the
ordinary Koide ratio in the $z$ variables. The selected added mass in the
original variables is
\begin{equation}
  x^*_{k,q}
  = \left(\frac{\sum_i m_i^{1/k}}{\sum_i m_i^{1/(2k)}}\right)^{2k},
\end{equation}
and the minimum is $[Q_0/(1+Q_0)]^{kq}$, where
$Q_0=Q(m_i^{1/k})$. The equal-multiplet extension similarly gives
$[Q_0/(1+\ell Q_0)]^{kq}$ for $\ell$ equal added amplitudes in the
$z$-space problem. This is only a change-of-variables corollary of the
theorem, not an additional dynamical input.

Equation~\eqref{eq:mstar} is the only ingredient needed for the
phenomenology below: once a base tuple is fixed, the theorem selects a
unique extension scale $m^*$ and an equally unique benchmark value
$Q_\text{min}$. The question is then empirical rather than formal:
whether any known SM mass lies unusually close to that theorem-selected
point.
If the charged-lepton Koide relation reflects real flavor structure
rather than accident, then this minimizer can be read as the unique
least-disruptive one-particle continuation of that structure. We do not
develop such a model here; we use the theorem only as a benchmark.

\section{Effective-Participant and Lorentzian Coordinates}
\label{sec:neff_lorentzian}

Beyond the minimizing mass itself, the theorem has useful exact corollaries
for concentration measures and for the idealized charged-lepton Koide limit.
In this section $Q_\ell^{\mathrm{K}}=2/3$ denotes the idealized exact
charged-lepton Koide value; the corresponding measured-input benchmarks are
introduced later in Sec.~\ref{sec:charm}.

The theorem has a compact reading in terms of concentration measures.
Define normalized amplitude weights $p_i = r_i/R_1 \ge 0$ with
$\sum_i p_i = 1$; the weights live on the probability simplex
$\Delta^{N-1}$.
Packaging them into the diagonal probability matrix
$\rho \equiv \mathrm{diag}(p_1,\ldots,p_N)$, one has
\begin{equation}
  Q_0 = \mathrm{tr}(\rho^2)
      = \sum_i p_i^2\, ,
\end{equation}
equivalently the Herfindahl--Hirschman concentration index~\cite{DOJFTC2010}
or the squared $\ell^2$ norm of $\{p_i\}$.
In the present classical setting this same quantity is simply the collision
probability, i.e. the probability that two independent draws from
$\{p_i\}$ return the same family index.
It is smallest at the equal-mass point $p_i=1/N$, where
$\rho=\mathbf{1}/N$ and $Q_0=1/N$, and it tends to $1$ as a single mass
dominates.
The exact charged-lepton Koide value $Q_\ell^{\mathrm{K}}=2/3$ is therefore
the statement that the charged-lepton $\sqrt m$ profile has purity $2/3$,
well above the three-mass equal-mass floor of $1/3$.

The extended ratio is
\begin{equation}
  Q_{N+1}(r) = \frac{R_2+r^2}{(R_1+r)^2}\, ,
\end{equation}
minimized at $r^* = Q_0R_1 = R_2/R_1$.
So the minimizing amplitude is the base purity times the total base
amplitude $R_1$.
The resulting amplitude profile is
\begin{equation}
  p_i^* = (1-Q_\text{min})p_i,
  \qquad
  p_{N+1}^* = Q_\text{min}
\end{equation}
so the old distribution is uniformly rescaled while the new particle absorbs
exactly the final purity. The rescaling factor follows from
$R_1+r^* = R_1(1+Q_0) = R_1/(1-Q_{\min})$.

Because $Q$ is itself a purity, its inverse has a direct counting
interpretation, and the theorem takes its simplest form in that
language. Define the \emph{effective participant number}
\begin{equation}
  N_{\mathrm{eff}} \equiv \frac{1}{Q}\,.
\end{equation}
It equals $N$ for the uniform distribution (all $N$ amplitudes contribute
equally) and approaches $1$ as one amplitude dominates.
Equivalently, $N_{\mathrm{eff}}$ is the size of the hypothetical equal-weight
distribution that has the same purity as $\{p_i\}$.
For the base tuple $N_{\mathrm{eff},0}=1/Q_0$ and for the peak value reached
at the minimizing extension $N_{\mathrm{eff},\max}=1/Q_\text{min}$, the relation
$Q_\text{min}=Q_0/(1+Q_0)$ is simply
\begin{equation}
  N_{\mathrm{eff},\max} = N_{\mathrm{eff},0} + 1\,.
  \label{eq:inverse_q_identity}
\end{equation}
So the optimal one-particle extension increases the effective participant
number by exactly one.
For the exact charged-lepton Koide value $Q_\ell^{\mathrm{K}}=2/3$, one has
$N_{\mathrm{eff},0}^{\mathrm{K}}=3/2$; after the optimal extension,
\begin{equation}
  Q_{4,\min}^{\mathrm{K}} = \frac{Q_\ell^{\mathrm{K}}}{1+Q_\ell^{\mathrm{K}}}
  = \frac{2}{5}
\end{equation}
and therefore $N_{\mathrm{eff},\max}^{\mathrm{K}}=5/2$.

The same language extends immediately to the equal-$k$ multiplet theorem.
If $k$ new amplitudes are added and minimized jointly, the result
$Q_{\min}^{(k)}=Q_0/(1+kQ_0)$ reads
\begin{equation}
  N_{\mathrm{eff},\max}^{(k)} = N_{\mathrm{eff},0} + k\,.
  \label{eq:inverse_q_identity_k}
\end{equation}
Each of the $k$ optimal particles therefore contributes exactly one unit to
the effective participant number.

The same effective-participant language also admits a useful exact shape
reparameterization for the one-particle extension curve.
For fixed base amplitudes and an added amplitude $r=\sqrt{x}$,
\begin{equation}
  N_{\mathrm{eff}}(r) = \frac{(R_1+r)^2}{R_2+r^2}\,.
\end{equation}
Defining the dimensionless extension strength
\begin{equation}
  s \equiv \frac{r}{R_1}
\end{equation}
and the new particle's normalized amplitude share
\begin{equation}
  p_X \equiv \frac{r}{R_1+r} = \frac{s}{1+s}\, ,
\end{equation}
one may introduce the standardized share-mismatch coordinate
\begin{equation}
  u \equiv \frac{p_X-Q_{\min}}{\sqrt{Q_{\min}(1-Q_{\min})}}\,.
  \label{eq:u_coordinate}
\end{equation}
Here $s$, $p_X$, and $u$ are all dimensionless.
The variable $s$ measures the new amplitude against the total coherent
base amplitude, $p_X$ is the new state's share of the total normalized
$\sqrt m$ weight, and $u$ measures the deviation of that share from the
theorem-selected value in units of the natural Bernoulli width
$\sqrt{Q_{\min}(1-Q_{\min})}$.
A direct calculation gives the exact identity
\begin{equation}
  Q(p_X) = p_X^2 + Q_0(1-p_X)^2
        = Q_{\min} + \frac{(p_X-Q_{\min})^2}{1-Q_{\min}}\,,
  \label{eq:q_quadratic_in_share}
\end{equation}
i.e. $Q$ is exactly quadratic in the share $p_X$, with minimum
$Q_{\min}=Q_0/(1+Q_0)$ at $p_X=Q_{\min}$. Substituting the $u$
coordinate of Eq.~\eqref{eq:u_coordinate} into
Eq.~\eqref{eq:q_quadratic_in_share} gives
$Q(u)=Q_{\min}(1+u^2)$, and hence
\begin{equation}
  N_{\mathrm{eff}}(u) = \frac{N_{\mathrm{eff},\max}}{1+u^2}
  \qquad\text{with}\qquad
  N_{\mathrm{eff},\max} = \frac{1}{Q_{\min}}\,.
  \label{eq:neff_lorentzian}
\end{equation}
The Lorentzian form is therefore not an independent observation but a
direct algebraic consequence of the exact quadratic dependence
\eqref{eq:q_quadratic_in_share} together with the inversion
$N_{\mathrm{eff}}=1/Q$.
The normalized peak is an exact Lorentzian,
$N_{\mathrm{eff}}/N_{\mathrm{eff},\max}=1/(1+u^2)$.
Its maximum occurs at $u=0$, where
$N_{\mathrm{eff}}(0)=N_{\mathrm{eff},\max}$.
Its half-maximum occurs at $|u|=1$, and the inflection points are the
universal values $u=\pm 1/\sqrt{3}$, independent of the base tuple.
This universal profile will be evaluated for the measured charged-lepton
base in Sec.~\ref{sec:charm}.

This Lorentzian form should be read kinematically, not dynamically.
It does \emph{not} imply an unstable state, a decay width, or a literal
Breit--Wigner resonance in flavor space.
Its value is instead interpretive: after the nonlinear coordinate change
of Eq.~\eqref{eq:u_coordinate}, the entire one-particle extension problem
is reduced to a universal even function of the deviation from the
theorem-selected amplitude share.
In that sense $u$ is a natural dimensionless mismatch coordinate for any
future dynamical model that attempts to realize the least-disruptive
continuation of a Koide-special base tuple.

\section{Lepton extension and the charm scale}
\label{sec:charm}

We now turn to the measured charged-lepton spectrum.
In this section, $Q_\ell^{\mathrm{exp}}$ denotes the charged-lepton
Koide ratio computed from the measured PDG~2024 pole masses, while
$Q_\ell^{\mathrm{K}}=2/3$ remains the idealized exact Koide value used
for comparison, with corresponding theorem-selected four-body benchmark
$Q_{4,\min}^{\mathrm{K}}=Q_\ell^{\mathrm{K}}/(1+Q_\ell^{\mathrm{K}})=2/5$.
Using the PDG~2024 charged-lepton pole masses and hence
$Q_\ell^{\mathrm{exp}}=0.666\,661(12)$~\cite{PDG2024},
\begin{equation}
  Q_{4,\min}^{\mathrm{exp}}
  = \frac{Q_\ell^{\mathrm{exp}}}{1+Q_\ell^{\mathrm{exp}}}
  = 0.399\,997\,8(43)\,.
  \label{eq:qmin_leptons_num}
\end{equation}
Equivalently, the measured charged-lepton seed has
$N_{\mathrm{eff},0}^{\mathrm{exp}}=1/Q_\ell^{\mathrm{exp}}=1.500014(27)$ and the
theorem-selected four-body peak value is
$N_{\mathrm{eff},\max}^{\mathrm{exp}}=1/Q_{4,\min}^{\mathrm{exp}}=2.500014(27)$.
The measured spectrum also fixes the corresponding extension scale through
Eq.~\eqref{eq:mstar}. For the charged leptons,
\begin{equation}
  r_*^\ell = \frac{R_2}{R_1} = 1.120\,42(7)\,\text{GeV}^{1/2}
\end{equation}
and therefore
\begin{equation}
  m_*^\ell = (r_*^\ell)^2
  = \left(\frac{m_e + m_\mu + m_\tau}
           {\sqrt{m_e}+\sqrt{m_\mu}+\sqrt{m_\tau}}\right)^{\!2}
  = 1.255\,34(16)\,\text{GeV}\,,
  \label{eq:mstar_num}
\end{equation}
using PDG~2024 lepton masses~\cite{PDG2024}.
If one substitutes the idealized value $Q_\ell^{\mathrm{K}}=2/3$ for
$Q_\ell^{\mathrm{exp}}$ in the closed-form identity
$m^*=Q_0(m_e+m_\mu+m_\tau)$, while keeping the same measured mass sum
$R_2=m_e+m_\mu+m_\tau$, one obtains the corresponding idealized
extension scale
\begin{equation}
  m_*^{\ell,\mathrm{K}} = Q_\ell^{\mathrm{K}}(m_e+m_\mu+m_\tau)
  = \frac{2}{3}(m_e+m_\mu+m_\tau)
  = 1.255\,35(14)\,\text{GeV}\,.
\end{equation}
Here $m^*=(R_2/R_1)^2 = Q_0R_2$ is the same identity as in the main
text; the only knob being turned is $Q_0$, no independent assumption
on the lepton spectrum is added.
This is numerically almost identical to $m_*^\ell$, the small residual
difference simply reflecting $Q_\ell^{\mathrm{K}}-Q_\ell^{\mathrm{exp}}$.
The PDG~2024 reference $\overline{\text{MS}}$ charm mass
$m_c(m_c) = 1.273(9)\,\text{GeV}$ is $1.41\%$ above $m_*^\ell$.

The full dependence of $Q(e,\mu,\tau,x)$ on the fourth mass is shown
in Fig.~\ref{fig:lepton_extension}. The curve exhibits the unique
minimum at $x = m_*^\ell$, the stronger lepton-data-specific lower
bound $Q \ge Q_{4,\min}^{\mathrm{exp}}$, and the proximity of the charm
mass to that minimum. No other Standard Model particle mass falls within
several percent of $m_*^\ell$: the nearest are
$m_\tau = 1.777\,\text{GeV}$ (41\% above) and the $b$ quark
($\times 3.3$ too heavy).

\begin{figure}[!t]
  \centering
  \includegraphics[width=0.94\linewidth]{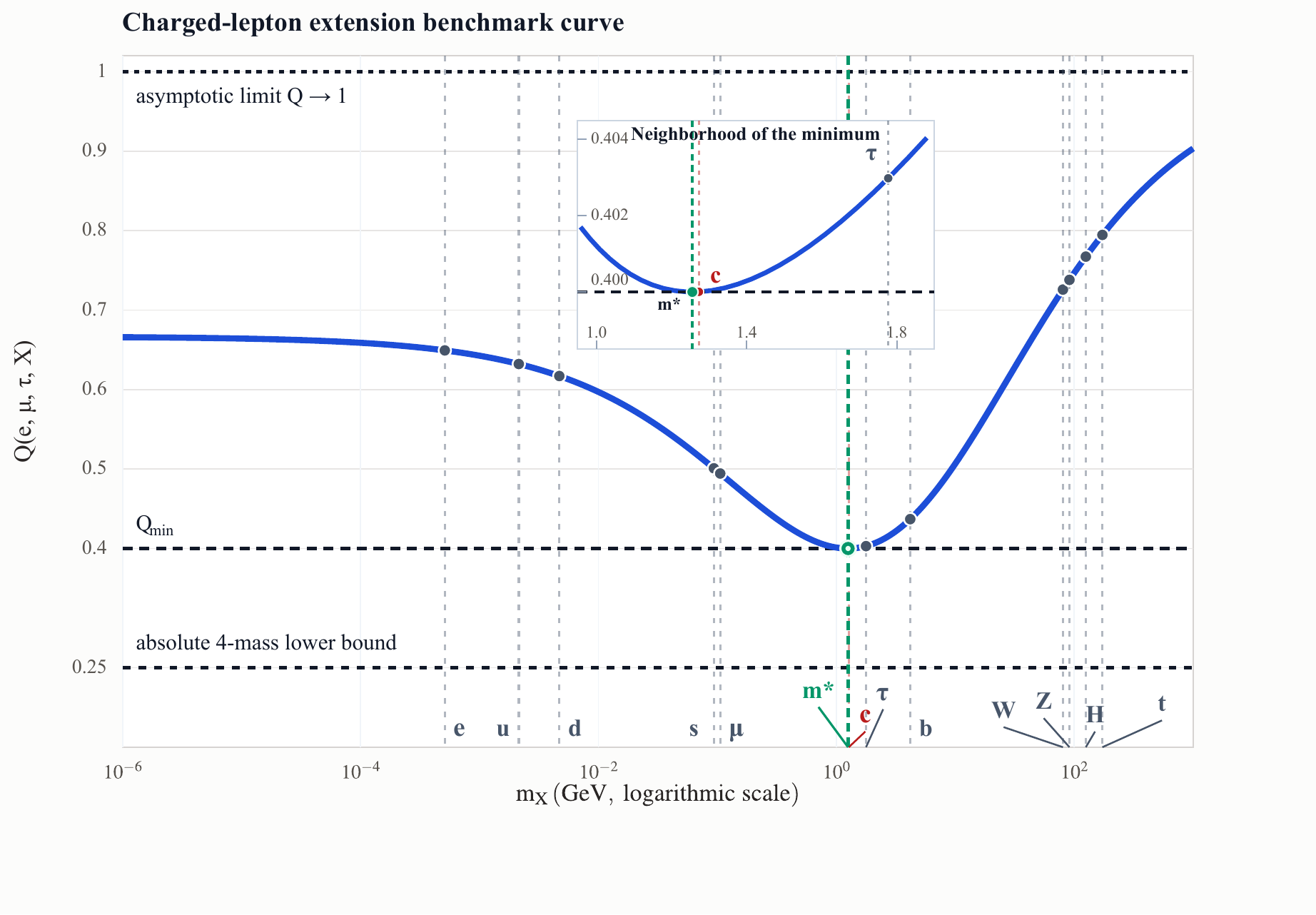}
  \caption{Extended Koide ratio $Q(e,\mu,\tau,x)$ as a function of a
  hypothetical fourth mass $x$. The unique minimum occurs at the
  theorem-selected mass $m_*^\ell = 1.255\,34(16)\,\text{GeV}$, while the
  physical charm mass lies nearby and gives $Q(e,\mu,\tau,c)$ close to
  the measured-input minimum $Q_{4,\min}^{\mathrm{exp}}$; it is also close
  to $2/5$, which is the idealized target obtained in the limit
  $Q_\ell=2/3$.
  The horizontal lines show the generic four-body lower bound $1/4$,
  the measured-input minimum $Q_{4,\min}^{\mathrm{exp}}$, the idealized
  four-body minimum $2/5$, and the asymptotic upper limit $Q \to 1$.
  Standard Model mass markers are shown for orientation. The inset resolves the
  charm-scale neighborhood of the minimum. In the Lorentzian coordinate
  of Sec.~\ref{sec:neff_lorentzian}, the charm point sits at
  $u_c=+0.0034(17)$, i.e. essentially at the peak center.}
  \label{fig:lepton_extension}
\end{figure}

Because $Q_{4}(x)$ is stationary at $x=m_*^\ell$, its leading variation is
quadratic in $(\sqrt{x}-\sqrt{m_*^\ell})$ near the minimum. Writing
$r=\sqrt{x}=r_*^\ell+\epsilon$ and using
$Q''(r_*^\ell)=2R_1/(R_1+r_*^\ell)^3$ from Eq.~\eqref{eq:dqdr}, one finds
\begin{equation}
  Q_4(r) = Q_{4,\min}^{\mathrm{exp}} +
  \frac{R_1\,\epsilon^2}{(R_1+r_*^\ell)^3} + O(\epsilon^3)\,.
\end{equation}

Thus a fractional offset
$\delta = (m_c - m_*^\ell)/m_*^\ell = 0.0141$ in the mass contributes only
a few parts in $10^6$ to $Q$.
Direct evaluation gives
\begin{equation}
  Q(e,\mu,\tau,c) \;=\; 0.400\,002\,5(64)
  \label{eq:Qext}
\end{equation}
using $m_c(m_c) = 1.273(9)\,\text{GeV}$.
This corresponds to a relative deviation of $6.2\,\text{ppm}$ from
$Q_{4,\min}^{\mathrm{K}} = 2/5$, and an $11.7\,\text{ppm}$ relative
deviation from the measured-input minimum
$Q_{4,\min}^{\mathrm{exp}}$. The latter is the direct benchmark selected
by the theorem for the observed lepton inputs, while the former is the
corresponding ideal-Koide limit. In the Lorentzian coordinate of
Sec.~\ref{sec:neff_lorentzian}, the same point corresponds to
$u_c=+0.0034(17)$, i.e. essentially the center of the exact profile.
In this sense, the charm mass provides a numerically close Standard Model
realization of the theorem-selected minimizing point, although the theorem
itself does not identify a physical SM state.
Here the mass prescription is again intentionally pragmatic rather than
derived: the charged leptons are kept at their pole masses because the
observed Koide relation is formulated for those inputs, while charm is
taken as the standard short-distance quantity $m_c(m_c)$ because the
charm pole mass is less clean conceptually and numerically.

The observed extenders map to the Lorentzian coordinate values shown in
Table~\ref{tab:ucoords}. Charm lies essentially at the peak center,
whereas the electroweak and top masses all lie near the $u\sim 1$ shoulder
where the Lorentzian has already fallen to about half height.

\begin{table}[t]
  \centering
  \caption{Lorentzian-coordinate values $u_X$ for one-particle extensions of
  the charged-lepton base $(e,\mu,\tau)$, using the definition in
  Eq.~\eqref{eq:u_coordinate}. The quoted uncertainties are propagated from
  the input mass uncertainties.}
  \label{tab:ucoords}
  \begin{tabular}{lc}
    \toprule
    Extender $X$ & $u_X$ \\
    \midrule
    $u$ & $-0.7616(56)$ \\
    $d$ & $-0.7367(25)$ \\
    $s$ & $-0.5024(85)$ \\
    $c$ & $+0.0034(17)$ \\
    $b$ & $+0.3040(9)$ \\
    $W$ & $+0.90251(3)$ \\
    $Z$ & $+0.91926(2)$ \\
    $H$ & $+0.95824(11)$ \\
    $t$ & $+0.99322(62)$ \\
    \bottomrule
  \end{tabular}
\end{table}

Within the Standard Model spectrum, the charm quark is the only particle
lying comparably close to this minimizing scale, even though it is not the
best-ranked entry in the broader scan of Sec.~\ref{sec:scan}.
A useful sanity check is the one-body problem $Q(m,X)$: because
$Q(m)=1$ for any single positive mass, the theorem reduces to the trivial
statement that the minimizing extension recycles the original mass and
gives $Q_{\min}=1/2$. This serves only as a control showing that
scheme-expanded proximity is cheap to obtain in the trivial case; what
makes the higher-body examples interesting is that they select nontrivial
new scales rather than merely recycling the original one.

\section{Renormalization-scheme sensitivity}
\label{sec:scheme}

The extended formula \eqref{eq:Qext} depends critically on the
renormalization scheme used for $m_c$.
This scheme sensitivity can be turned into a diagnostic.
Table~\ref{tab:scheme} shows $Q(e,\mu,\tau,c)$ for four natural
choices of $m_c$.

The $\overline{\text{MS}}$ mass evaluated at its own mass scale is the
conventional choice in Table~\ref{tab:scheme} that lies closest to the
measured-input minimum $Q_{4,\min}^{\mathrm{exp}}$.
The pole mass, which deviates by $5{,}013\,\text{ppm}$ from
$Q_{4,\min}^{\mathrm{exp}}$, is disfavoured
compared with the own-scale $\overline{\text{MS}}$ choice.
This is best read as a scheme-sensitivity diagnostic, not as a derivation
of a unique physical prescription: within this limited comparison, the
closest conventional realization happens to occur for the own-scale
$\overline{\text{MS}}$ charm mass.

The lepton masses, which enter as pole masses (the unique
scheme-independent definition for stable particles), are unambiguous.
For charm, by contrast, a short-distance mass is more appropriate than a
pole mass, and the use of $m_c(m_c)$ is consistent with standard
heavy-quark phenomenology~\cite{PDG2024}. This gives a limited physical
rationale for the hybrid input set used here, but it does not elevate
the result to a common-scheme or field-theoretic mass relation. A more
uniform comparison, for example using running fermion masses or Yukawa
couplings at a common high scale, would be a different question from the
one addressed here; in particular, the known running-mass versions of the
charged-lepton relation are already less sharp than the pole-mass
formula~\cite{XingZhang2006}. We therefore do not reinterpret the
present comparison as a common-scale running-mass test.
While no renormalization scale makes a quark mass directly analogous to
the charged-lepton pole masses, the standard short-distance quantity
$m_c(m_c)$ is the least arbitrary low-energy benchmark for charm,
because it is defined at the charm threshold and minimizes large
perturbative logarithms; we therefore use it only as a pragmatic
comparison convention, not as a common-scheme statement.
The explicit $M_Z$ exercise below is an external literature-based
cross-check using the running masses quoted in
Ref.~\cite{XingZhangZhou2008}; it is not part of the scan pipeline of
Sec.~\ref{sec:scan}.
The discrete scale set
$\mu\in\{2\,\text{GeV},m_c,m_b,m_Z,m_t\}$ is chosen simply to span the
standard low-scale reference point, the heavy-quark thresholds, and one
electroweak benchmark without introducing a denser a posteriori scale
scan. The qualitative conclusion is stable across this set: the own-scale
short-distance charm mass is the closest low-energy conventional choice,
while the proximity does not survive as a common-scale statement.

\begin{table}[!t]
  \centering
  \caption{Extended Koide ratio $Q(e,\mu,\tau,c)$ as a function of
  the charm mass scheme. Lepton pole masses are fixed throughout. The
  fourth column gives the corresponding Lorentzian mismatch coordinate
  $u_c$ from Eq.~\eqref{eq:u_coordinate} with propagated uncertainty.
  The $m_c$ and $Q$ entries carry $1\sigma$ uncertainties; the ppm column
  uses central values only and is included purely as a ranking diagnostic.
  Note in particular that for the $\mu=m_Z$ row the propagated $1\sigma$
  band on the ppm entry is itself of order $10^4\,\text{ppm}$ (driven by
  the $\delta m_c(M_Z)\approx 84\,\text{MeV}$ input uncertainty), so its
  ordering relative to the $2\,\text{GeV}$ and pole-mass rows is not
  sharply established by the central values alone.}
  \label{tab:scheme}
  \begin{tabular}{lccccc}
    \toprule
    Scheme & $m_c$ (GeV) & \multicolumn{1}{c}{$(m_c-m_*^\ell)/m_*^\ell$} & \multicolumn{1}{c}{$u_c$} & \multicolumn{1}{c}{$Q$} &
             \multicolumn{1}{c}{relative ppm from $Q_{4,\min}^{\mathrm{exp}}$} \\
    \midrule
    $\overline{\text{MS}}$ at $\mu = m_c$
      & $1.273(9)$ & $+1.41\%$ & $+0.0034(17)$ & $0.400\,002\,5(64)$ & $+12$ \\
    $\overline{\text{MS}}$ at $\mu = 2\,\text{GeV}$
      & $1.094(8)$ & $-12.86\%$ & $-0.0335(17)$ & $0.400\,445(46)$ & $+1{,}119$ \\
    $\overline{\text{MS}}$ at $\mu = m_Z$
      & $0.619(84)$ & $-50.69\%$ & $-0.166(30)$ & $0.4110(40)$ & $+27{,}429$ \\
    Pole mass
      & $1.67(5)$ & $+33.03\%$ & $+0.0708(75)$ & $0.40200(43)$ & $+5{,}013$ \\
    \bottomrule
  \end{tabular}
\end{table}

The same ordering is visible in the Lorentzian coordinate: the own-scale
$\overline{\text{MS}}$ choice has $|u_c|\ll 1$, while the alternative charm
prescriptions move to visibly larger mismatches on the same exact kinematic
profile. For the $m_Z$ row, the much larger $u_c$ error simply tracks the
quoted uncertainty on the external running-mass benchmark
$m_c(M_Z)=0.619(84)\,\text{GeV}$ used below.

If the one-particle minimizer is applied once more to the measured
lepton-plus-charm system, the preferred fifth mass returns to the same
neighborhood: one finds $m_{*,5}^{\ell+c} = 1.2624(36)\,\text{GeV}$
(propagated from $\delta m_\tau$ and $\delta m_c$),
only $0.86\%$ below the reference $m_c(m_c) = 1.273(9)\,\text{GeV}$, i.e.
by $m_c-m_{*,5}^{\ell+c} = 10.6 \pm 9.7\,\text{MeV}$ when the
uncertainties on both inputs are combined in quadrature. The two
quantities are therefore statistically compatible at about the
$1.1\sigma$ level rather than sharply distinguishable.
In that limited sense the recursive continuation returns to the same
charm-scale neighborhood rather than selecting a clearly new Standard
Model mass scale.

For orientation, one may nonetheless inspect the common-scale running
version at $\mu=M_Z$. Using the published running masses of quarks and
charged leptons in Ref.~\cite{XingZhangZhou2008}, one finds
\begin{equation}
  Q_\ell(M_Z)=Q(e,\mu,\tau;M_Z)=0.667\,928\,7(11)
\end{equation}
and
\begin{equation}
  m_*^\ell(M_Z)=1.235\,30(15)\,\text{GeV} \ ,
  \qquad
  Q_{4,\min}(M_Z)=\frac{Q_\ell(M_Z)}{1+Q_\ell(M_Z)}=0.400\,454\,0(40) \ .
\end{equation}
The corresponding charm extension gives
\begin{equation}
  Q(e,\mu,\tau,c;M_Z)=0.41098(39) \ ,
\end{equation}
which lies about $2.63(98)\times 10^4$ ppm above the theorem-selected
four-body minimum. Equivalently, $m_c(M_Z)=0.619(84)\,\text{GeV}$ lies about
$49.9(6.8)\%$ below $m_*^\ell(M_Z)$. This common-scale benchmark therefore
confirms, in line with the running-mass discussion of
Ref.~\cite{XingZhang2006}, that the sharp low-energy coincidence does not
survive as a comparable $M_Z$ statement.

The size of this $M_Z$ shift can be understood from the QED running of
the lepton masses themselves. At one-loop QED, all three charged leptons
share the same mass anomalous dimension $\gamma_m^{(\text{QED})} = (3/2\pi)\,\alpha_{\text{em}}(\mu)$,
because they carry the same electric charge. Their masses therefore
run by a common multiplicative factor,
$m_i(\mu) = R(\mu)\,m_i(\mu_0)$ for $i\in\{e,\mu,\tau\}$ with
$R(\mu) = \exp[-\!\int_{\mu_0}^{\mu} \gamma_m^{(\text{QED})}(\mu')\,d\ln\mu']$.
Because $R$ is flavor-universal, $R_2(\mu) = R(\mu)^2 R_2$ and
$R_1(\mu) = R(\mu) R_1$, so
\begin{equation}
  Q_\ell(\mu)
  \;=\; \frac{R_2(\mu)}{R_1(\mu)^2}
  \;=\; Q_\ell\,,
  \qquad
  m_*^\ell(\mu) \;=\; R(\mu)^2\,m_*^\ell\,.
  \label{eq:Qell_RG_invariance}
\end{equation}
The Koide ratio is thus exactly RG-invariant at one-loop QED, and the
entire continuous flow of the lepton-extension scale collapses to a
single intensive trajectory governed by $R(\mu)$. With one-loop QED
running and threshold-matched $\alpha_{\text{em}}(\mu)$ anchored at
$\alpha_{\text{em}}^{-1}(M_Z) = 127.952$,
$m_*^\ell(\mu)$ varies monotonically from $1.255\,\text{GeV}$ at
$\mu = 1\,\text{GeV}$ to $1.213\,\text{GeV}$ at $\mu = 10\,\text{TeV}$,
a $3\%$ excursion over four decades, and crosses no Standard Model
mass within that range. The $\sim 1.9\times 10^{-3}$ shift in
$Q_\ell$ between its low-scale value and the $M_Z$ figure of
Eq.~(5.21) is therefore a subleading effect (higher-loop QED, weak
corrections, and pole-to-$\overline{\text{MS}}$ scheme conversion in
the inputs of Ref.~\cite{XingZhangZhou2008}), not a leading-order
running of $Q_\ell$ itself. Within the universal one-loop QED regime
the discrete five-scale scan of Sec.~\ref{sec:scan} reduces to repeated
samples of this single curve.

\section{Foot-Angle Interpretation}
\label{sec:foot-angle}

Following Foot's geometric reformulation~\cite{Foot1994}, let
$r=(\sqrt{m_1},\dots,\sqrt{m_N})$ and define $\theta_N$ as the angle
between $r$ and the democratic direction
$\mathbf{1}=(1,\dots,1)$. In model classes where the charged-lepton
spectrum is generated from family-space bilinears or aligned vacuum
expectation values, the mass amplitudes $r_i=\sqrt{m_i}$ can be more natural
flavor coordinates than the masses $m_i$ themselves; various family-space
mechanisms along these lines have been proposed~\cite{Koide1990FamilyHiggs,Koide2006TripletHiggs,Koide2009Yukawaon},
but no unique derivation of this viewpoint is presently established. Then
\begin{equation}
  Q_N = \frac{\|r\|^2}{(\mathbf{1}\!\cdot\! r)^2}
      = \frac{1}{N\cos^2\theta_N}\,.
  \label{eq:q_theta_relation}
\end{equation}
Equivalently,
\begin{equation}
  \theta_N = \arccos\!\left(\frac{1}{\sqrt{NQ_N}}\right) .
  \label{eq:theta_from_q}
\end{equation}
Combining Eq.~\eqref{eq:q_theta_relation} with
$N_{\mathrm{eff}}\equiv 1/Q$ gives the exact dictionary
\begin{equation}
  \cos^2\theta_N = \frac{N_{\mathrm{eff}}}{N}\,.
  \label{eq:theta_neff_dictionary}
\end{equation}
Along the one-particle extension line with base moments
$R_1=\sum_i r_i$ and $R_2=\sum_i r_i^2$, this becomes
\begin{equation}
  \theta_{N+1}(r)
  = \arccos\!\left(
      \frac{R_1+r}{\sqrt{N+1}\sqrt{R_2+r^2}}
    \right) .
  \label{eq:theta_extension_line}
\end{equation}
Using the Lorentzian form of Eq.~\eqref{eq:neff_lorentzian}, one therefore has
\begin{equation}
  \cos^2\theta_{N+1}(u)
  = \frac{N_{\mathrm{eff}}(u)}{N+1}
  = \frac{N_{\mathrm{eff},\max}}{(N+1)(1+u^2)}
  = \frac{\cos^2\theta_*}{1+u^2}\,.
  \label{eq:theta_u_relation}
\end{equation}
Equivalently,
\begin{equation}
  \theta_{N+1}(u)
  = \arccos\!\left(\frac{\cos\theta_*}{\sqrt{1+u^2}}\right) .
  \label{eq:theta_from_u}
\end{equation}
So the Foot-angle miss is quadratic near the optimum,
$\theta_{N+1}(u)=\theta_*+u^2/(2\tan\theta_*)+O(u^4)$, as expected at the
center of an even Lorentzian profile.
Thus $Q$ is equivalently an angular misalignment variable: smaller $Q$
means better alignment with the democratic direction. Propagating the
quoted mass uncertainties through this relation gives for the observed
charged leptons
$\theta_\ell = 0.785\,393\,9(9)\,\text{rad} = 44.999\,756(52)^\circ$,
extremely close to the exact Koide angle $\pi/4$~\cite{Foot1994}.
Along the constrained extension line where the original mass amplitudes are
held fixed, the theorem-selected value $r^*=R_2/R_1$ is precisely the
point of least angular deviation from the democratic direction.
Correspondingly,
\begin{equation}
  \theta_*
  = \theta_{N+1}(r^*)
  = \arccos\!\left(\frac{1}{\sqrt{(N+1)Q_{N+1,\min}}}\right)
  = \arccos\!\left(\sqrt{\frac{1+Q_0}{(N+1)Q_0}}\right) .
  \label{eq:theta_star}
\end{equation}
The theorem-selected four-body minimum
$Q_{4,\min}^{\mathrm{exp}} = 0.399\,997\,8(43)$ corresponds to
\begin{equation}
  \theta_* = 0.659\,054(7)\,\text{rad} = 37.761\,04(40)^\circ\,.
  \label{eq:theta_star_numeric}
\end{equation}
The physical charm extension of Eq.~\eqref{eq:Qext},
$Q(e,\mu,\tau,c)=0.400\,002\,5(64)$, gives
\begin{equation}
  \theta_c = 0.659\,062(10)\,\text{rad} = 37.761\,48(59)^\circ\,.
  \label{eq:theta_c_numeric}
\end{equation}
The angular miss is therefore only
\begin{equation}
  \Delta\theta_c = \theta_c-\theta_* = 0.000\,008(12)\,\text{rad}
  = 0.000\,43(71)^\circ\,.
  \label{eq:delta_theta_c}
\end{equation}
So the charm point is best viewed as a very small democratic-misalignment
perturbation away from the exact theorem minimum. The uncertainty on
$\Delta\theta_c$ is larger than its central value because it is obtained
by subtracting two very close angles with independent propagated errors;
that does not mean $r_c-r_*^\ell$ vanishes within error, only that the
angular distinction is not by itself statistically sharp.

The angle $\theta_c$ is not, however, close to any especially simple
rational multiple of $\pi$. For example, it lies
$1.761\,48(59)^\circ$ above $\pi/5$.
So what is distinctive here is not proximity to a simple standalone angle,
but proximity to the theorem-selected democratic angle associated with the
charged-lepton base itself.

\section{Look-Elsewhere Calibration}
\label{sec:scan}

To quantify where the charm continuation sits within the non-neutrino SM
one-particle extensions, we performed an exhaustive scan of all distinct
one-particle extensions of 2-body and 3-body seeds drawn from
$\{e,\mu,\tau,u,d,s,c,b,t,W,Z,H\}$. The 2-/3-body restriction is a
deliberate choice of comparison class: 1-body bases give the trivial
$Q(m)=1$ problem whose extension reduces to a single-mass match and
adds no useful comparison value, while 4-body bases of distinct
non-neutrino SM species yield only $\binom{12}{4}=495$ patterns from
which 8 extensions each can be drawn, a comparison class so small
that it is dominated by the few clusters already containing the
charged-lepton triplet and is not informative as a look-elsewhere
sample. The 2-/3-body window is therefore the smallest non-trivial
choice that gives both an interpretable rank and a sample size large
enough to populate the low-miss tail. If a trial contains
quarks, all quark masses in that trial are evaluated at the same common
$\overline{\text{MS}}$ scale chosen from
$\mu\in\{2\,\text{GeV},m_c,m_b,m_Z,m_t\}$. Purely non-quark trials are
scale-independent and are evaluated once with pole masses for
$e,\mu,\tau,W,Z,H$. The scan uses unordered bases with distinct species,
giving $660$ pair-base patterns and $1{,}980$ triple-base patterns, for
$2{,}640$ collapsed physical patterns in total. The corresponding raw
trial count is $120 + 5(2{,}640-120)=12{,}720$: the $120$ purely
non-quark patterns appear once, while the remaining $2{,}520$
quark-containing patterns are evaluated at five common scales. Here
``common-scale'' means common-scale for the quark masses within a given
trial; the charged leptons and electroweak bosons remain at their usual
reference masses (pole masses for $e,\mu,\tau,W,Z,H$) regardless of the
quark scale, so the scan inherits the same mixed-definition convention
that we labeled pragmatic for the standalone charm comparison. For the
scan itself, the quark masses are generated from
the PDG input values by one-loop QCD running with
$\alpha_s(M_Z)=0.1179$, using continuous matching of $\alpha_s$ and the
running masses across the $m_c$, $m_b$, and $m_t$ thresholds and changing
the active flavor number stepwise at those thresholds. The seed inputs
are the PDG~2024 short-distance values $m_q(m_q)$ for $q\in\{c,b,t\}$ and
$m_q(2\,\text{GeV})$ for $q\in\{u,d,s\}$, run from their reference scales
to each target $\mu$ in the grid; no pole-to-$\overline{\text{MS}}$
conversion is applied beyond what is implicit in the PDG input choice.
This is the
common-scale prescription used to generate the leaderboard tables below.
It is distinct from the separate $M_Z$ paragraph of Sec.~\ref{sec:scheme},
which is an external literature cross-check based on the running masses
quoted in Ref.~\cite{XingZhangZhou2008} rather than a row taken from the
scan pipeline itself. We ranked the trials by
the relative miss from the measured-input theorem minimum,
\begin{equation}
  \Delta_Q \equiv 10^6\,\frac{|Q_\text{ext}-Q_\text{min}|}{Q_\text{min}}\,.
\end{equation}
This gives a controlled benchmark ranking that avoids mixed-scale
quark assignments within a single trial. The resulting raw and collapsed
fractions are scan frequencies within this chosen benchmark set, not
formal p-values derived from a probabilistic ensemble. Two further
caveats apply to how the rank statistics should be read. First, the
common-scale grid $\mu\in\{2\,\text{GeV},m_c,m_b,m_Z,m_t\}$ contains
$\mu=m_c$, which is precisely the scale at which the
charged-lepton-plus-charm realization is closest to its theorem
minimum; the grid therefore overlaps a scale that is favorable for
the row of interest. Second, the collapse rule keeps the best-of-five
common-scale realizations per physical pattern, which is itself a
selection on the same ranking variable $\Delta_Q$ and favors patterns
whose theorem-selected scale happens to align with one of the grid
scales. The leaderboard rank of any specific row, including the
charged-lepton-plus-charm one, can therefore shift by several
positions under nearby grid redefinitions; the rank statements below
should be read as conditional on the specific grid and collapse rule
adopted here, not as grid-independent quantities. Because the
benchmark class is defined by the non-neutrino SM spectrum, one-particle
extensions of 2-body and 3-body seeds, and the small conventional common
scale set above, these fractions should be read only conditionally on
that chosen ensemble. They are therefore benchmark ranks within a partly
structured and partly a posteriori comparison class, not global measures of
how surprising the charged-lepton-plus-charm continuation would be in a
broader flavor-model space.

\begin{table}[t]
  \centering
  \caption{Benchmark-scan rank of the charged-lepton-plus-charm result in
  the common-scale scan. Here $\Delta_Q$ is the relative ppm miss from the
  theorem minimum. ``Raw'' counts each common-scale realization as a
  separate trial, while ``collapsed'' keeps only the sharpest
  common-scale realization for each physical species pattern $(\text{base},X)$.}
  \label{tab:specialness}
  \begin{tabular}{lccc}
    \toprule
    Scan scope & Trials & Rank of $e,\mu,\tau\to c$ & Fraction at or better \\
    \midrule
    Raw all trials & 12{,}720 & 33 & 0.259\% \\
    Raw 3-body bases only & 9{,}660 & 30 & 0.311\% \\
    Collapsed all patterns & 2{,}640 & 24 & 0.909\% \\
    Collapsed 3-body patterns only & 1{,}980 & 21 & 1.061\% \\
    \bottomrule
  \end{tabular}
\end{table}

\begin{figure}[t]
  \centering
  \includegraphics[width=\linewidth]{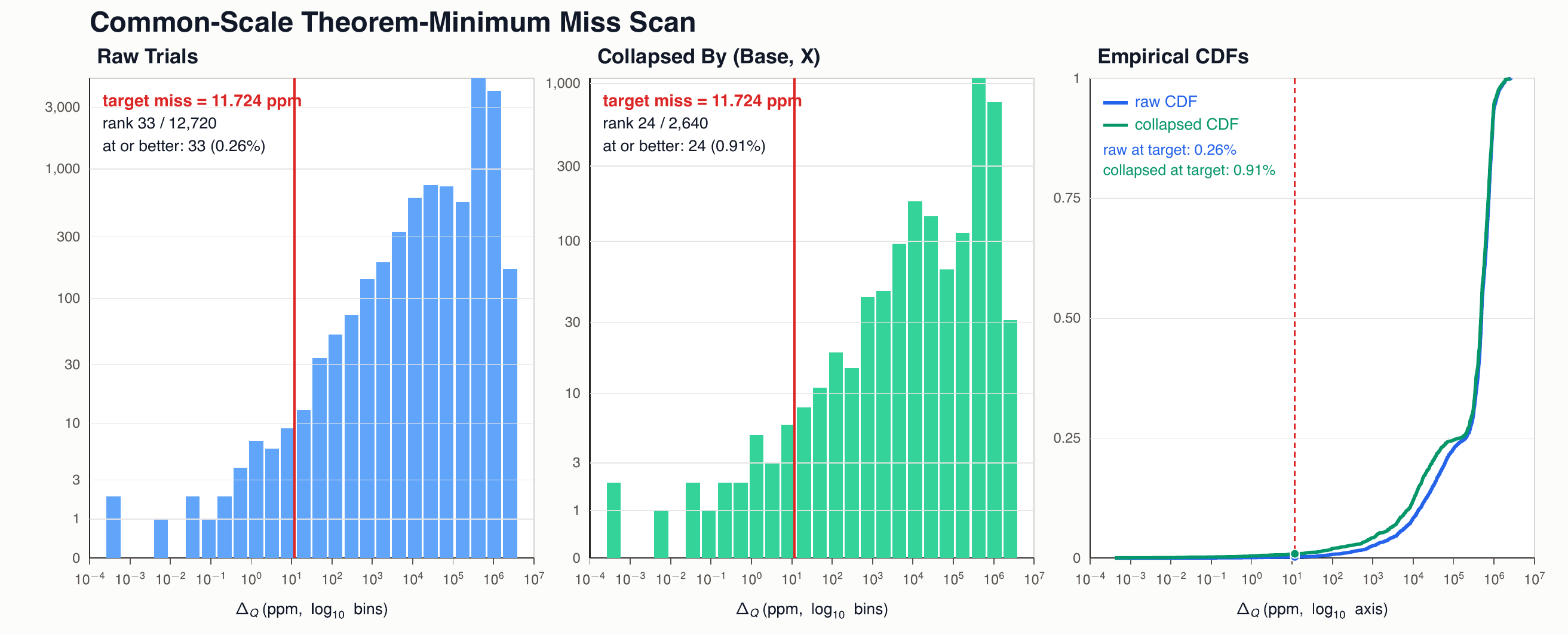}
  \caption{Distribution of the relative miss
  $\Delta_Q = 10^6 |Q_\text{ext}-Q_\text{min}|/Q_\text{min}$ in the
  common-scale scan. The left and middle panels show the raw and collapsed
  trial sets; the right panel compares the raw and collapsed empirical
  CDFs. The red reference line indicates the charged-lepton-plus-charm
  result. The raw panel counts each common-scale realization separately,
  while the collapsed panel keeps only the best common-scale realization
  for each physical species pattern $(\text{base},X)$. The figure makes
  clear that the charm continuation lies near the low-miss end of the
  distribution, but within a broader population of electroweak/heavy-sector
  near-minima.}
  \label{fig:specialness_hist}
\end{figure}

For the charged-lepton seed, the target row is
$Q(e,\mu,\tau,c(m_c))=0.400\,002\,5$, which lies
$11.7\,\text{ppm}$ above the measured-input minimum
$Q_{4,\min}^{\mathrm{exp}}=0.399\,997\,8$. Table~\ref{tab:specialness}
shows that this places the charm continuation near the low-miss end of the
distribution, but not as a unique outlier: in the common-scale scan it is
roughly a top-0.3\% result in the raw trial set and about a top-1\% result after
collapsing repeated scale realizations to one physical pattern.
Figure~\ref{fig:specialness_hist} displays the same picture graphically:
the raw and collapsed $\Delta_Q$ distributions from the common-scale
scan both place the charged-lepton-plus-charm entry (red reference line)
in the low-miss tail, but within a broader population of similarly close
near-minima rather than as a singular outlier.
The best-ranked collapsed near-minima are dominated by electroweak/heavy-quark
clusters such as $(\mu,b,H)\to Z$ and $(\tau,b,H)\to W$. The charm case
therefore remains uncommon, but its status is empirical
rather than statistically singular. The first 30 entries of the collapsed
common-scale leaderboard are listed explicitly in
Appendix~\ref{app:specialness-table}, specifically in
Table~\ref{tab:specialness-appendix}. Within that full collapsed
leaderboard, the charged-lepton-plus-charm entry appears as row 24,
marked with a star. It is worth reading that row against the rows
immediately above it, since the better-ranked entries are mostly dense
electroweak or heavy-quark clusters rather than comparably direct
continuations of the charged-lepton Koide pattern.
This electroweak/heavy-sector dominance is itself informative: once a
base tuple already lies in the dense $W$/$Z$/$H$/top region, nearby
minima are often driven by local spectral clustering rather than by a
continuation of an already distinguished flavor relation. Restricting the
collapsed leaderboard to fermions only leaves 756 patterns, within which
$e,\mu,\tau\to c$ ranks 6th (top $0.79\%$); the leading entries of that
restricted ranking are listed in
Appendix~\ref{app:specialness-table}, specifically in
Table~\ref{tab:specialness-fermion-only}. In that narrower comparison
class the charm case remains one of the best-ranked fermionic
continuations.
Table~\ref{tab:specialness-appendix} also shows the low-scale
$(e,\mu)\to s$ echo at row 7. That case is numerically sharp, but it is
less interesting physically because the base pair $Q(e,\mu)$ is not tied
to a comparably simple distinguished target. Its apparent closeness is
driven by the specific low-scale choice for the strange-quark mass, and,
as in the charm comparison, the level of agreement changes under
reasonable changes of scheme or scale.
Numerically, the $(e,\mu)$ base has
$Q(e,\mu)=0.8784119487(12)$, the theorem selects
$m_*^{e\mu}=93.2604655(22)\,\text{MeV}$ and
$Q_{3,\min}^{e\mu}=0.46763541369(33)$, while the low-scale strange input
$m_s(2\,\text{GeV})=93.4(9)\,\text{MeV}$ gives
$Q(e,\mu,s)=0.46763548(84)$, a central miss of $0.14\,\text{ppm}$ with an
uncertainty of $180\,\text{ppm}$. At current strange-mass precision this
is therefore not a statistically informative sub-ppm coincidence; it is
best regarded as a low-scale control example whose central value happens
to sit very close to the theorem minimum.
By contrast, the charm extension $Q(e,\mu,\tau,c)=0.400\,002\,5(64)$
has a propagated relative uncertainty of about $16\,\text{ppm}$, of the
same order as its $11.7\,\text{ppm}$ central miss, so the
charm coincidence is statistically informative at roughly the
$\lesssim 1\sigma$ level rather than buried inside an order-of-magnitude
larger error band as for $(e,\mu)\to s$.
This is the key qualitative distinction in the scan: several patterns are
close to their theorem minima, but $Q(e,\mu,\tau,c)$ is the clearest
case in which the extension is numerically close \emph{and} the unextended base
$Q(e,\mu,\tau)$ is already very close to an interesting simple
fraction, namely the Koide value $2/3$.

The same theorem can also serve as a mass estimator once a base tuple is
regarded as physically meaningful but the extension mass is not yet known.
That is the logic behind the charm application, and it is why the theorem
may still be useful outside the charged-lepton case: if one had a credible
Koide-like target for a neutrino sector or another hidden multiplet, the
closed-form minimizer $m^*=(R_2/R_1)^2$ would immediately turn the known
members into a predicted extension scale. In the present paper we do not
push that logic for neutrinos, because any such target assignment is much
more speculative than the charged-lepton Koide relation itself.

\section{Conclusions}
\label{sec:conclusions}

We have proved and applied the $Q$-minimum theorem: for any
$N$-particle group with Koide ratio $Q_0$, the minimum of $Q$ over
all extensions by a single additional mass is
$Q_\text{min} = Q_0/(1+Q_0)$, achieved at the
\emph{optimal mass} $m^* = (R_2/R_1)^2$.
In the associated effective-participant language,
$N_{\mathrm{eff}}\equiv 1/Q$, the one-particle theorem reads
$N_{\mathrm{eff},\max}=N_{\mathrm{eff},0}+1$, while the equal-$k$
multiplet extension theorem reads
$N_{\mathrm{eff},\max}^{(k)}=N_{\mathrm{eff},0}+k$.

The theorem itself is purely kinematic. It does not explain why any
physical spectrum should lie near the minimizing extension; rather, it
defines a unique benchmark against which candidate continuations can be
measured. The exact Lorentzian-coordinate reformulation developed in
Sec.~\ref{sec:neff_lorentzian} should be read in the same spirit: it is an
exact reparameterization of the kinematic extension curve, not a literal
dynamical Breit--Wigner resonance.
No global statistical significance is claimed for the charm proximity
beyond its rank within the explicit benchmark ensemble defined in
Sec.~\ref{sec:scan}.

Applied to the charged leptons:
\begin{itemize}
  \item The idealized value $Q_\ell^{\mathrm{K}} = 2/3$ implies the exact
        four-body benchmark value $Q_{4,\min}^{\mathrm{K}} = 2/5$.
        In the same idealized language this is
        $N_{\mathrm{eff},0}^{\mathrm{K}}=3/2 \to N_{\mathrm{eff},\max}^{\mathrm{K}}=5/2$.
  \item The measured lepton input gives the minimizing mass
        $m_*^\ell = 1.255\,34(16)\,\text{GeV}$ and the corresponding minimum
        $Q_{4,\min}^{\mathrm{exp}} = 0.399\,997\,8(43)$.
        Equivalently, the measured seed and theorem-selected peak values are
        $N_{\mathrm{eff},0}^{\mathrm{exp}}=1.500014(27)$ and
        $N_{\mathrm{eff},\max}^{\mathrm{exp}}=2.500014(27)$.
  \item The conventional own-scale short-distance charm mass
          $m_c(m_c) = 1.273(9)\,\text{GeV}$
          is $1.4\%$ above $m_*^\ell$, yielding
          $Q(e,\mu,\tau,c) = 0.400\,002\,5(64)$, which lies
          $11.7\,\text{ppm}$ above $Q_{4,\min}^{\mathrm{exp}}$ and,
           in the idealized Koide limit, $6.2\,\text{ppm}$ above
          $Q_{4,\min}^{\mathrm{K}} = 2/5$.
          In the exact Lorentzian coordinate it corresponds to
          $u_c=+0.0034(17)$, i.e. essentially the peak center.
  \item Among the conventional charm-mass choices considered here,
        $m_c(m_c)$ gives the closest realization within the low-energy
        comparison class used here. This comparison is intentionally mixed-definition: lepton pole
        masses are kept because that is the empirical charged-lepton Koide
        input, while charm is represented by the short-distance quantity
        $m_c(m_c)$ because a charm pole mass is not comparably clean.
        This gives a practical phenomenological convention, not a
        model-derived common scheme; when all four masses are instead moved
        to the common scale $M_Z$, one finds
         $Q(e,\mu,\tau,c;M_Z)=0.41098(39)$, about $2.63(98)\times 10^4$ ppm
         above the theorem-selected minimum, so the low-energy charm
         proximity does not survive as a common-scale statement.
  \item In Foot's angular language, the near-$45^\circ$ lepton geometry is
        the classic observation of Ref.~\cite{Foot1994}, while the new point
        here is that the theorem selects the least-misaligned continuation
        angle $\theta_* = 0.659\,054(7)\,\text{rad} = 37.761\,04(40)^\circ$
        and that the physical
        charm extension lands at
        $\theta_c = 0.659\,062(10)\,\text{rad} = 37.761\,48(59)^\circ$,
        i.e. only
        $\Delta\theta_c = 0.000\,008(12)\,\text{rad} = 0.000\,43(71)^\circ$
        away from that
         theorem-selected democratic direction.
  \item In a common-scale look-elsewhere scan over all non-neutrino SM
        2-body and 3-body bases, the charged-lepton-plus-charm pattern ranks
        $33/12{,}720$ in the raw trial set and $24/2{,}640$ after collapsing
        repeated common-scale realizations, so it lies near the low-miss end of
        the distribution but is not unique. Restricting to fermion-only
        collapsed patterns moves it to rank $6/756$. The full collapsed scan
        also contains a numerically sharper low-scale $(e,\mu)\to s$ echo,
        but it is less structurally compelling because it starts from the
        nondistinguished two-body value $Q(e,\mu)$ and its closeness is more
        convention-dependent.
\end{itemize}

The $Q$-minimum theorem itself is exact. Its phenomenological application to the
charm quark remains empirical: the theorem identifies a minimizing mass
scale, and the physical charm mass lies close to it in one
practical mixed-definition convention. The exhaustive 2-body and 3-body
scan shows that this proximity is uncommon but not unique once the
nearby Standard Model alternatives are enumerated in a controlled
quark-common-scale benchmark setup. That is the main claim of the paper: not that the
theorem explains flavor, but that it defines a benchmark and that
the charged-lepton-plus-charm continuation lands high in the
resulting Standard Model ranking, especially within the narrower
fermion-only comparison class.
Those rank statements are conditional on the benchmark class chosen here
and should not be reinterpreted as global p-values or prior probabilities
for flavor structure.

Within the broader Koide literature, the principal new results of this
work are the exact one-particle minimization theorem itself, its
closed-form extension rule $Q_\text{min}=Q_0/(1+Q_0)$ with minimizer
$m^*=(R_2/R_1)^2$, its equal-multiplet extension
$Q_{\min}^{(k)}=Q_0/(1+kQ_0)$ with all added amplitudes fixed at the
same $r^*=R_2/R_1$ (in particular, the classic Koide value
$Q_0=2/3$ gives the exact one-particle minimum $Q_{\min}=2/5$), its
effective-participant reformulation
$N_{\mathrm{eff},\max}=N_{\mathrm{eff},0}+1$ and
$N_{\mathrm{eff},\max}^{(k)}=N_{\mathrm{eff},0}+k$, the exact Lorentzian
coordinate form of the one-particle extension curve, and the corresponding Standard Model calibration,
including the explicit charged-lepton-plus-charm benchmark
$Q(e,\mu,\tau,c)=0.400\,002\,5(64)$ together with its scan-based
context.
Those results do not supply a dynamical flavor model, but they do define
a reproducible benchmark that may be useful in future work on
Koide-inspired extensions or hidden-sector multiplets.
The limited outlook is therefore simply this: whenever a base multiplet is
independently motivated, the theorem supplies a unique continuation scale
against which any candidate additional state can be compared. Whether that
benchmark has deeper dynamical meaning is a separate question not pursued
here.

\begin{acknowledgments}
The author thanks Christian Schmidt and Samir Varma for valuable
correspondence and the Particle Data Group for maintaining the data
compilations used throughout this work.
\end{acknowledgments}

\appendix

\section{Explicit \texorpdfstring{$r$}{r}-space computation of the measured-input minimum}
\label{app:explicit}

Using PDG~2024 lepton masses (in MeV),
\begin{align}
  m_e &= 0.51099895000(15)\,, \\
  m_\mu &= 105.6583755(23)\,, \\
  m_\tau &= 1776.86(21)\,.
\end{align}
The corresponding mass amplitudes are
\begin{align}
  r_e &= \sqrt{m_e} = 0.71484190560(11)\,\text{MeV}^{1/2}\,, \\
  r_\mu &= \sqrt{m_\mu} = 10.27902600(11)\,\text{MeV}^{1/2}\,, \\
  r_\tau &= \sqrt{m_\tau} = 42.1528(25)\,\text{MeV}^{1/2}\,.
\end{align}
Hence
\begin{align}
  R_1 &= r_e + r_\mu + r_\tau
       = 53.1467(25)\,\text{MeV}^{1/2}\,, \\
  R_2 &= r_e^2 + r_\mu^2 + r_\tau^2
       = m_e + m_\mu + m_\tau
       = 1883.029(210)\,\text{MeV}\,, \\
  Q_\ell^{\mathrm{exp}} &= \frac{R_2}{R_1^2}
       = \frac{1883.029}{2824.572}
       = 0.666\,661(12)\,, \\
  r^* &= \frac{R_2}{R_1}
       = \frac{1883.029}{53.1467}
       = 35.4308(23)\,\text{MeV}^{1/2}\,, \\
  m_*^\ell &= (r^*)^2 = 1255.34(16)\,\text{MeV}
       = 1.25534(16)\,\text{GeV}\,.
\end{align}
The relative uncertainty on $R_2$ is $\delta R_2/R_2 \approx 1.1\times 10^{-4}$,
inherited essentially entirely from $\delta m_\tau/m_\tau\approx 1.2\times 10^{-4}$,
because $m_\tau$ dominates the sum.
The smaller relative uncertainty on $Q_\ell^{\mathrm{exp}}$
($\approx 1.8\times 10^{-5}$) and on $Q_{4,\min}^{\mathrm{exp}}$
($\approx 1.1\times 10^{-5}$) reflects partial cancellation:
$Q=R_2/R_1^2$ depends on $m_\tau$ both through $R_2\propto m_\tau$ and
through $R_1\supset \sqrt{m_\tau}$, so
$\partial\ln Q/\partial\ln m_\tau = m_\tau/R_2 - r_\tau/R_1
\approx 0.943 - 0.793 = 0.150$, suppressing the propagated $Q$ error by
roughly an order of magnitude relative to the input. The further
suppression on $Q_{4,\min}=Q_\ell/(1+Q_\ell)$ is the kinematic factor
$\partial Q_{4,\min}/\partial Q_\ell = 1/(1+Q_\ell)^2 \approx 0.36$.
The theorem then gives
\begin{equation}
  Q_{4,\min}^{\mathrm{exp}} = \frac{Q_\ell^{\mathrm{exp}}}{1+Q_\ell^{\mathrm{exp}}}
  = \frac{0.666\,661}{1.666\,661}
  = 0.399\,997\,8(43)\,.
\end{equation}
Direct evaluation of the charm extension with
$m_c = 1273(9)\,\text{MeV}$ and
$r_c = \sqrt{m_c} = \sqrt{1273\,\text{MeV}} = 35.679(126)\,\text{MeV}^{1/2}$ gives
\begin{align}
  R_1' &= R_1 + r_c = 88.8258(126)\,\text{MeV}^{1/2}\,, \\
  R_2' &= R_2 + r_c^2 = 3156.0(9.0)\,\text{MeV}\,, \\
  Q_4 &= \frac{R_2'}{(R_1')^2}
      = \frac{3156.029}{7889.998}
      = 0.400\,002\,5(64)\,.
\end{align}
The corresponding measured effective-participant values and Lorentzian coordinate are
\begin{align}
  N_{\mathrm{eff},0}^{\mathrm{exp}} &= \frac{1}{Q_\ell^{\mathrm{exp}}}
    = 1.500\,014(27)\,, \\
  N_{\mathrm{eff},\max}^{\mathrm{exp}} &= \frac{1}{Q_{4,\min}^{\mathrm{exp}}}
    = 2.500\,014(27)\,, \\
  u_c &= \frac{p_c-Q_{4,\min}^{\mathrm{exp}}}{\sqrt{Q_{4,\min}^{\mathrm{exp}}(1-Q_{4,\min}^{\mathrm{exp}})}}
      = +0.0034(17)\,,
\end{align}
where $p_c=r_c/(R_1+r_c)=0.40168(85)$ is the normalized amplitude share of
the charm extender.
This is the measured-input realization of the exact identity
$N_{\mathrm{eff},\max}=N_{\mathrm{eff},0}+1$ discussed in the main text.
The uncertainty on $R_2'$ is dominated by $\delta m_c$, whereas the
uncertainty on $Q_4$ is much smaller because the charm input lies close
to the minimizing point, where the first derivative vanishes.
The residuals are
\begin{align}
  \frac{Q_4 - 2/5}{2/5} &= 6.18\times 10^{-6}
  \qquad (6.2\,\text{ppm})\,, \\
  \frac{Q_4 - Q_{4,\min}^{\mathrm{exp}}}{Q_{4,\min}^{\mathrm{exp}}}
    &= 1.17\times 10^{-5}
  \qquad (11.7\,\text{ppm})\,.
\end{align}

\section{Collapsed common-scale look-elsewhere leaderboard}
\label{app:specialness-table}

Table~\ref{tab:specialness-appendix} lists the first 30 entries of the
collapsed common-scale leaderboard discussed in
Sec.~\ref{sec:scan}. Here ``collapsed'' means that for each
physical species pattern $(\text{base},X)$ we keep only the common-scale
realization with the smallest relative ppm miss from the theorem minimum. In
the tables below, $m^*$ is the theorem-selected extension mass, $m_X$ is
the actual mass of the candidate extender $X$ at the listed scale,
$\delta_m \equiv 10^6(m_X-m^*)/m^*$ is their signed relative offset in
ppm, $u_{\mathrm{obs}}$ is the Lorentzian-coordinate value of the observed
extender defined in Eq.~\eqref{eq:u_coordinate}, and
$\Delta\theta \equiv \theta_{\mathrm{obs}}-\theta_*$ is the
corresponding Foot-angle miss, quoted in $\mu$rad.

\begin{table}[p]
  \centering
  \caption{First 30 entries of the collapsed common-scale leaderboard,
  ranked by $\Delta_Q \equiv 10^6 |Q_\text{obs}-Q_\text{min}|/Q_\text{min}$.
  The mass columns $m^*$ and $m_X$ are in GeV. The angle columns
  $\theta_*$ and $\theta_{\mathrm{obs}}$ are in rad, while the final
  column gives the Foot-angle miss $\Delta\theta$ in $\mu$rad. The signed
  mass-offset column is $\delta_m \equiv 10^6(m_X-m^*)/m^*$, and the
  $u_{\mathrm{obs}}$ column is the Lorentzian-coordinate value from
  Eq.~\eqref{eq:u_coordinate} with propagated $1\sigma$ uncertainty.
  A star marks the charged-lepton-plus-charm target row. The scale column
  lists the common quark evaluation scale; ``indep.'' denotes a pure
  non-quark pattern. Several entries in this table have
  $|u_{\mathrm{obs}}|\gtrsim 0.3$, well outside the small-deviation
  regime where the Lorentzian profile of
  Eq.~\eqref{eq:neff_lorentzian} retains its peak character; the listed
  $u_{\mathrm{obs}}$ values for those rows should be read as positions
  on the exact kinematic curve rather than as ``near-peak'' realizations.}
  \label{tab:specialness-appendix}
  \setlength{\tabcolsep}{2.5pt}
  \resizebox{\textwidth}{!}{%
  \begin{tabular}{r l l r r r r r r r r r r}
    \toprule
    Rank & Pattern & Scale & $Q_\text{min}$ & $Q_\text{obs}$ & $\Delta_Q$ & $m^*$ & $m_X$ & $\delta_m$ & $u_{\mathrm{obs}}$ & $\theta_*$ & $\theta_{\mathrm{obs}}$ & $\Delta\theta$ \\
    \midrule
    1 & $\mu,b,H\to Z$ & $m_b$ & 0.413238900 & 0.413238900 & 0.000411 & 91.195134 & 91.187621 & -82.381 & $-0.00013(25)$ & 0.67963(16) & 0.67963(16) & 0.0003(65) \\
    2 & $\tau,b,H\to W$ & 2 GeV & 0.379081210 & 0.379081210 & 0.000487 & 80.376505 & 80.369193 & -90.969 & $-0.00218(26)$ & 0.62307(19) & 0.62307(19) & 0.0003(80) \\
    3 & $\mu,s,Z\to W$ & $m_c$ & 0.467992686 & 0.467992689 & 0.007778 & 80.397609 & 80.369193 & -353.438 & $-0.00066(8)$ & 0.75118(55) & 0.75118(55) & 0.004(45) \\
    4 & $t,W\to H$ & $m_t$ & 0.340010524 & 0.340010532 & 0.022759 & 125.120300 & 125.200015 & +637.106 & $-0.01056(42)$ & 0.1406(19) & 0.1406(19) & 0.1(19) \\
    5 & $\mu,b,t\to H$ & $m_t$ & 0.430746950 & 0.430746960 & 0.023921 & 125.278251 & 125.200015 & -624.492 & $-0.00933(59)$ & 0.70466(42) & 0.70466(42) & 0.01(50) \\
    6 & $\tau,c,H\to Z$ & $m_t$ & 0.416276923 & 0.416276958 & 0.084543 & 91.080115 & 91.187621 & +1180.344 & $+0.00789(25)$ & 0.68415(16) & 0.68415(16) & 0.052(95) \\
    7 & $e,\mu\to s$ & 2 GeV & 0.467635414 & 0.467635479 & 0.139122 & 0.093260 & 0.093400 & +1496.181 & $+0.00037(249)$ & 0.56557910660(56) & 0.5655792(94) & 0.1(94) \\
    8 & $\mu,t,W\to H$ & $m_Z$ & 0.334326147 & 0.334326255 & 0.324022 & 125.502579 & 125.200015 & -2410.815 & $-0.00385(42)$ & 0.52617(54) & 0.52617(54) & 0.3(18) \\
    9 & $e,t,W\to H$ & $m_t$ & 0.339544132 & 0.339544272 & 0.413088 & 124.860701 & 125.200015 & +2717.541 & $-0.01008(42)$ & 0.53930(46) & 0.53930(45) & 0.3(19) \\
    10 & $b,W,H\to Z$ & $m_t$ & 0.303543312 & 0.303543491 & 0.590759 & 90.883361 & 91.187621 & +3347.813 & $+0.00479(14)$ & 0.43344(18) & 0.43344(18) & 0.64(25) \\
    11 & $s,c,b\to \tau$ & $m_t$ & 0.320350290 & 0.320350572 & 0.881519 & 1.769727 & 1.776860 & +4030.952 & $-0.08043(43)$ & 0.4877(30) & 0.4877(30) & 0.8(21) \\
    12 & $u,t,W\to H$ & $m_t$ & 0.339249104 & 0.339249413 & 0.909015 & 124.696947 & 125.200015 & +4034.329 & $-0.00957(44)$ & 0.53857(50) & 0.53857(50) & 0.8(29) \\
    13 & $s,b,H\to Z$ & $m_b$ & 0.414487581 & 0.414488224 & 1.552190 & 91.650117 & 91.187621 & -5046.319 & $-0.00079(26)$ & 0.68212(50) & 0.68212(50) & 0.96(63) \\
    14 & $\mu,t,Z\to H$ & $m_t$ & 0.331521875 & 0.331522390 & 1.553503 & 125.865037 & 125.200015 & -5283.608 & $-0.01141(40)$ & 0.51885(45) & 0.51886(45) & 1.4(37) \\
    15 & $d,t,W\to H$ & $m_t$ & 0.338892548 & 0.338893146 & 1.764836 & 124.499519 & 125.200015 & +5626.496 & $-0.00911(42)$ & 0.53769(47) & 0.53769(47) & 1.5(40) \\
    16 & $c,t,Z\to H$ & $m_Z$ & 0.322297431 & 0.322298345 & 2.837812 & 124.301269 & 125.200015 & +7230.387 & $+0.00261(41)$ & 0.49340(61) & 0.49340(60) & 2.6(55) \\
    17 & $\mu,c,b\to \tau$ & $m_Z$ & 0.311380167 & 0.311381062 & 2.875405 & 1.789931 & 1.776860 & -7302.211 & $-0.07882(41)$ & 0.46004(68) & 0.46004(69) & 2.9(28) \\
    18 & $s,t,W\to H$ & $m_Z$ & 0.335827143 & 0.335828648 & 4.484159 & 126.328634 & 125.200015 & -8933.987 & $-0.00424(42)$ & 0.53001(66) & 0.53001(66) & 3.8(66) \\
    19 & $e,\tau,s\to c$ & $m_c$ & 0.401297789 & 0.401300569 & 6.928563 & 1.259408 & 1.273000 & +10792.368 & $+0.00092(53)$ & 0.6611(34) & 0.6611(34) & 4.5(74) \\
    20 & $s,b,t\to H$ & $m_t$ & 0.433408931 & 0.433412261 & 7.683021 & 126.609018 & 125.200015 & -11128.773 & $-0.00992(59)$ & 0.70827(53) & 0.70827(53) & 4.5(89) \\
    21 & $s,t,Z\to H$ & $m_t$ & 0.333094273 & 0.333097021 & 8.249893 & 126.736889 & 125.200015 & -12126.487 & $-0.01178(40)$ & 0.52298(58) & 0.52298(59) & 7.2(85) \\
    22 & $\mu,\tau\to c$ & $m_c$ & 0.406449527 & 0.406453397 & 9.520976 & 1.289105 & 1.273000 & -12493.044 & $-0.00309(49)$ & 0.4380058(60) & 0.438016(13) & 10(11) \\
    23 & $c,b,H\to W$ & $m_c$ & 0.382528794 & 0.382533207 & 11.534316 & 81.501266 & 80.369193 & -13890.251 & $-0.00719(26)$ & 0.62933(28) & 0.62934(28) & 7.9(13) \\
    24$^*$ & $e,\mu,\tau\to c$ & $m_c$ & 0.399997781 & 0.400002471 & 11.723747 & 1.255341 & 1.273000 & +14066.729 & $+0.0034(17)$ & 0.6590545(40) & 0.6590620(86) & 7.6(77) \\
    25 & $\tau,s\to c$ & $m_c$ & 0.407781159 & 0.407787380 & 15.256796 & 1.293413 & 1.273000 & -15782.573 & $-0.00567(54)$ & 0.4415(56) & 0.4415(56) & 16(18) \\
    26 & $\mu,u\to s$ & $m_b$ & 0.441384763 & 0.441392989 & 18.635453 & 0.085012 & 0.083546 & -17245.942 & $+0.02575(583)$ & 0.5176(10) & 0.5176(10) & 16(120) \\
    27 & $\tau,c,t\to H$ & $m_Z$ & 0.425827919 & 0.425836775 & 20.797273 & 127.532475 & 125.200015 & -18289.142 & $-0.00298(60)$ & 0.69786(46) & 0.69787(48) & 12(15) \\
    28 & $d,b,H\to Z$ & $m_c$ & 0.416316467 & 0.416325516 & 21.734841 & 92.930262 & 91.187621 & -18752.132 & $-0.00917(25)$ & 0.68420(21) & 0.68422(21) & 13(15) \\
    29 & $\tau,W,H\to Z$ & indep. & 0.309913954 & 0.309921197 & 23.369713 & 93.118011 & 91.187621 & -20730.574 & $-0.00490(14)$ & 0.45525(53) & 0.45527(54) & 24(14) \\
    30 & $\mu,\tau,d\to c$ & 2 GeV & 0.387706554 & 0.387716195 & 24.866871 & 1.194975 & 1.170735 & -20284.763 & $+0.01546(52)$ & 0.63848(100) & 0.63849(100) & 17(12) \\
    \bottomrule
  \end{tabular}
  }
\end{table}

Table~\ref{tab:specialness-fermion-only} shows the top entries of the
collapsed leaderboard after removing $W$, $Z$, and $H$ and restricting to
the fermionic species $\{e,\mu,\tau,u,d,s,c,b,t\}$. In that reduced
comparison class the charged-lepton-plus-charm target moves from row 24
in the full table to row 6.

\begin{table}[t]
  \centering
  \caption{Top 10 entries of the fermion-only collapsed leaderboard
  (756 patterns total), obtained by restricting the scan to
  $\{e,\mu,\tau,u,d,s,c,b,t\}$. The mass columns $m^*$ and $m_X$ are
  in GeV. The angle columns $\theta_*$ and $\theta_{\mathrm{obs}}$ are
  in rad, while the final column gives the Foot-angle miss
  $\Delta\theta$ in $\mu$rad; all three angle columns use $1\sigma$
  uncertainties in APS-style parenthetical notation. The signed
  mass-offset column is $\delta_m \equiv 10^6(m_X-m^*)/m^*$, and the
  $u_{\mathrm{obs}}$ column is the Lorentzian-coordinate value from
  Eq.~\eqref{eq:u_coordinate}. For the
  charged-lepton-plus-charm target row,
  $\Delta\theta = 7.6(77)\,\mu\text{rad}$. A star marks the
  charged-lepton-plus-charm target row.}
  \label{tab:specialness-fermion-only}
  \setlength{\tabcolsep}{2.5pt}
  \resizebox{\textwidth}{!}{%
  \begin{tabular}{r l l r r r r r r r r r r}
    \toprule
    Rank & Pattern & Scale & $Q_\text{min}$ & $Q_\text{obs}$ & $\Delta_Q$ & $m^*$ & $m_X$ & $\delta_m$ & $u_{\mathrm{obs}}$ & $\theta_*$ & $\theta_{\mathrm{obs}}$ & $\Delta\theta$ \\
    \midrule
    1 & $e,\mu\to s$ & 2 GeV & 0.467635414 & 0.467635479 & 0.139122 & 0.093260 & 0.093400 & +1496.181 & $+0.00037(249)$ & 0.56557910660(56) & 0.5655792(94) & 0.1(94) \\
    2 & $s,c,b\to \tau$ & $m_t$ & 0.320350290 & 0.320350572 & 0.881519 & 1.769727 & 1.776860 & +4030.952 & $-0.08043(43)$ & 0.4877(30) & 0.4877(30) & 0.8(21) \\
    3 & $\mu,c,b\to \tau$ & $m_Z$ & 0.311380167 & 0.311381062 & 2.875405 & 1.789931 & 1.776860 & -7302.211 & $-0.07882(41)$ & 0.46004(68) & 0.46004(69) & 2.9(28) \\
    4 & $e,\tau,s\to c$ & $m_c$ & 0.401297789 & 0.401300569 & 6.928563 & 1.259408 & 1.273000 & +10792.368 & $+0.00092(53)$ & 0.6611(34) & 0.6611(34) & 4.5(74) \\
    5 & $\mu,\tau\to c$ & $m_c$ & 0.406449527 & 0.406453397 & 9.520976 & 1.289105 & 1.273000 & -12493.044 & $-0.00309(49)$ & 0.4380058(60) & 0.438016(13) & 10(11) \\
    6$^*$ & $e,\mu,\tau\to c$ & $m_c$ & 0.399997781 & 0.400002471 & 11.723747 & 1.255341 & 1.273000 & +14066.729 & $+0.0034(17)$ & 0.6590545(40) & 0.6590620(86) & 7.6(77) \\
    7 & $\tau,s\to c$ & $m_c$ & 0.407781159 & 0.407787380 & 15.256796 & 1.293413 & 1.273000 & -15782.573 & $-0.00567(54)$ & 0.4415(56) & 0.4415(56) & 16(18) \\
    8 & $\mu,u\to s$ & $m_b$ & 0.441384763 & 0.441392989 & 18.635453 & 0.085012 & 0.083546 & -17245.942 & $+0.02575(583)$ & 0.518(10) & 0.518(10) & 16(120) \\
    9 & $\mu,\tau,d\to c$ & 2 GeV & 0.387706554 & 0.387716195 & 24.866871 & 1.194975 & 1.170735 & -20284.763 & $+0.01546(52)$ & 0.63848(100) & 0.63849(100) & 17(12) \\
    10 & $\mu,\tau,s\to c$ & $m_b$ & 0.341496792 & 0.341510567 & 40.338017 & 1.019592 & 1.047213 & +27090.548 & $+0.05513(51)$ & 0.5441(37) & 0.5441(37) & 33(21) \\
    \bottomrule
  \end{tabular}
  }
\end{table}

\clearpage


\begin{thebibliography}{21}%
\makeatletter
\providecommand \@ifxundefined [1]{%
 \@ifx{#1\undefined}
}%
\providecommand \@ifnum [1]{%
 \ifnum #1\expandafter \@firstoftwo
 \else \expandafter \@secondoftwo
 \fi
}%
\providecommand \@ifx [1]{%
 \ifx #1\expandafter \@firstoftwo
 \else \expandafter \@secondoftwo
 \fi
}%
\providecommand \natexlab [1]{#1}%
\providecommand \enquote  [1]{``#1''}%
\providecommand \bibnamefont  [1]{#1}%
\providecommand \bibfnamefont [1]{#1}%
\providecommand \citenamefont [1]{#1}%
\providecommand \href@noop [0]{\@secondoftwo}%
\providecommand \href [0]{\begingroup \@sanitize@url \@href}%
\providecommand \@href[1]{\@@startlink{#1}\@@href}%
\providecommand \@@href[1]{\endgroup#1\@@endlink}%
\providecommand \@sanitize@url [0]{\catcode `\\12\catcode `\$12\catcode
  `\&12\catcode `\#12\catcode `\^12\catcode `\_12\catcode `\%12\relax}%
\providecommand \@@startlink[1]{}%
\providecommand \@@endlink[0]{}%
\providecommand \url  [0]{\begingroup\@sanitize@url \@url }%
\providecommand \@url [1]{\endgroup\@href {#1}{\urlprefix }}%
\providecommand \urlprefix  [0]{URL }%
\providecommand \Eprint [0]{\href }%
\providecommand \doibase [0]{https://doi.org/}%
\providecommand \selectlanguage [0]{\@gobble}%
\providecommand \bibinfo  [0]{\@secondoftwo}%
\providecommand \bibfield  [0]{\@secondoftwo}%
\providecommand \translation [1]{[#1]}%
\providecommand \BibitemOpen [0]{}%
\providecommand \bibitemStop [0]{}%
\providecommand \bibitemNoStop [0]{.\EOS\space}%
\providecommand \EOS [0]{\spacefactor3000\relax}%
\providecommand \BibitemShut  [1]{\csname bibitem#1\endcsname}%
\let\auto@bib@innerbib\@empty
\bibitem [{\citenamefont {Koide}(1982)}]{Koide1982}%
  \BibitemOpen
  \bibfield  {author} {\bibinfo {author} {\bibfnamefont {Y.}~\bibnamefont
  {Koide}},\ }\bibfield  {title} {\bibinfo {title} {Fermion-boson two-body
  model of quarks and leptons and {C}abibbo mixing},\ }\href
  {https://doi.org/10.1007/BF02817096} {\bibfield  {journal} {\bibinfo
  {journal} {Lett. Nuovo Cim.}\ }\textbf {\bibinfo {volume} {34}},\ \bibinfo
  {pages} {201} (\bibinfo {year} {1982})}\BibitemShut {NoStop}%
\bibitem [{\citenamefont {Koide}(1983)}]{Koide1983}%
  \BibitemOpen
  \bibfield  {author} {\bibinfo {author} {\bibfnamefont {Y.}~\bibnamefont
  {Koide}},\ }\bibfield  {title} {\bibinfo {title} {New view of quark-lepton
  mass hierarchy},\ }\href {https://doi.org/10.1103/PhysRevD.28.252} {\bibfield
   {journal} {\bibinfo  {journal} {Phys. Rev. D}\ }\textbf {\bibinfo {volume}
  {28}},\ \bibinfo {pages} {252} (\bibinfo {year} {1983})}\BibitemShut
  {NoStop}%
\bibitem [{\citenamefont {Navas}\ \emph {et~al.}(2024)\citenamefont {Navas}
  \emph {et~al.}}]{PDG2024}%
  \BibitemOpen
  \bibfield  {author} {\bibinfo {author} {\bibfnamefont {S.}~\bibnamefont
  {Navas}} \emph {et~al.} (\bibinfo {collaboration} {Particle Data Group}),\
  }\bibfield  {title} {\bibinfo {title} {Review of particle physics},\ }\href
  {https://doi.org/10.1103/PhysRevD.110.030001} {\bibfield  {journal} {\bibinfo
   {journal} {Phys. Rev. D}\ }\textbf {\bibinfo {volume} {110}},\ \bibinfo
  {pages} {030001} (\bibinfo {year} {2024})}\BibitemShut {NoStop}%
\bibitem [{\citenamefont {Foot}(1994)}]{Foot1994}%
  \BibitemOpen
  \bibfield  {author} {\bibinfo {author} {\bibfnamefont {R.}~\bibnamefont
  {Foot}},\ }\bibfield  {title} {\bibinfo {title} {A note on {Koide}'s lepton
  mass relation},\ }\href {https://doi.org/10.1142/S0217732394000186}
  {\bibfield  {journal} {\bibinfo  {journal} {Mod. Phys. Lett. A}\ }\textbf
  {\bibinfo {volume} {9}},\ \bibinfo {pages} {169} (\bibinfo {year} {1994})},\
  \Eprint {https://arxiv.org/abs/hep-ph/9402242} {arXiv:hep-ph/9402242}
  \BibitemShut {NoStop}%
\bibitem [{\citenamefont {Sumino}(2009{\natexlab{a}})}]{Sumino2009a}%
  \BibitemOpen
  \bibfield  {author} {\bibinfo {author} {\bibfnamefont {Y.}~\bibnamefont
  {Sumino}},\ }\bibfield  {title} {\bibinfo {title} {Family gauge symmetry and
  {Koide}'s mass formula},\ }\href
  {https://doi.org/10.1016/j.physletb.2008.12.048} {\bibfield  {journal}
  {\bibinfo  {journal} {Phys. Lett. B}\ }\textbf {\bibinfo {volume} {671}},\
  \bibinfo {pages} {477} (\bibinfo {year} {2009}{\natexlab{a}})},\ \Eprint
  {https://arxiv.org/abs/0812.2090} {arXiv:0812.2090} \BibitemShut {NoStop}%
\bibitem [{\citenamefont {Sumino}(2009{\natexlab{b}})}]{Sumino2009b}%
  \BibitemOpen
  \bibfield  {author} {\bibinfo {author} {\bibfnamefont {Y.}~\bibnamefont
  {Sumino}},\ }\bibfield  {title} {\bibinfo {title} {Family gauge symmetry as
  an origin of {Koide}'s mass formula and charged lepton spectrum},\ }\href
  {https://doi.org/10.1088/1126-6708/2009/05/075} {\bibfield  {journal}
  {\bibinfo  {journal} {J. High Energy Phys.}\ }\textbf {\bibinfo {volume}
  {2009}}\bibfield  {number} {\bibinfo  {number} { (05)},\ \bibinfo {pages}
  {075}},\ }\Eprint {https://arxiv.org/abs/0812.2103} {arXiv:0812.2103}
  \BibitemShut {NoStop}%
\bibitem [{\citenamefont {Li}\ and\ \citenamefont {Ma}(2005)}]{LiMa2005}%
  \BibitemOpen
  \bibfield  {author} {\bibinfo {author} {\bibfnamefont {N.}~\bibnamefont
  {Li}}\ and\ \bibinfo {author} {\bibfnamefont {B.-Q.}\ \bibnamefont {Ma}},\
  }\bibfield  {title} {\bibinfo {title} {Estimate of neutrino masses from
  {Koide}'s relation},\ }\href {https://doi.org/10.1016/j.physletb.2005.01.066}
  {\bibfield  {journal} {\bibinfo  {journal} {Phys. Lett. B}\ }\textbf
  {\bibinfo {volume} {609}},\ \bibinfo {pages} {309} (\bibinfo {year}
  {2005})},\ \Eprint {https://arxiv.org/abs/hep-ph/0505028}
  {arXiv:hep-ph/0505028} \BibitemShut {NoStop}%
\bibitem [{\citenamefont {G{\'e}rard}\ \emph {et~al.}(2006)\citenamefont
  {G{\'e}rard}, \citenamefont {Goffinet},\ and\ \citenamefont
  {Herquet}}]{GerardGoffinetHerquet2006}%
  \BibitemOpen
  \bibfield  {author} {\bibinfo {author} {\bibfnamefont {J.-M.}\ \bibnamefont
  {G{\'e}rard}}, \bibinfo {author} {\bibfnamefont {F.}~\bibnamefont
  {Goffinet}},\ and\ \bibinfo {author} {\bibfnamefont {M.}~\bibnamefont
  {Herquet}},\ }\bibfield  {title} {\bibinfo {title} {A new look at an old mass
  relation},\ }\href {https://doi.org/10.1016/j.physletb.2005.12.054}
  {\bibfield  {journal} {\bibinfo  {journal} {Phys. Lett. B}\ }\textbf
  {\bibinfo {volume} {633}},\ \bibinfo {pages} {563} (\bibinfo {year}
  {2006})},\ \Eprint {https://arxiv.org/abs/hep-ph/0510289}
  {arXiv:hep-ph/0510289} \BibitemShut {NoStop}%
\bibitem [{\citenamefont {Li}\ and\ \citenamefont {Ma}(2006)}]{LiMa2006}%
  \BibitemOpen
  \bibfield  {author} {\bibinfo {author} {\bibfnamefont {N.}~\bibnamefont
  {Li}}\ and\ \bibinfo {author} {\bibfnamefont {B.-Q.}\ \bibnamefont {Ma}},\
  }\bibfield  {title} {\bibinfo {title} {Energy scale independence of {Koide}'s
  relation for quark and lepton masses},\ }\href
  {https://doi.org/10.1103/PhysRevD.73.013009} {\bibfield  {journal} {\bibinfo
  {journal} {Phys. Rev. D}\ }\textbf {\bibinfo {volume} {73}},\ \bibinfo
  {pages} {013009} (\bibinfo {year} {2006})},\ \Eprint
  {https://arxiv.org/abs/hep-ph/0601031} {arXiv:hep-ph/0601031} \BibitemShut
  {NoStop}%
\bibitem [{\citenamefont {Xing}\ and\ \citenamefont
  {Zhang}(2006)}]{XingZhang2006}%
  \BibitemOpen
  \bibfield  {author} {\bibinfo {author} {\bibfnamefont {Z.-z.}\ \bibnamefont
  {Xing}}\ and\ \bibinfo {author} {\bibfnamefont {H.}~\bibnamefont {Zhang}},\
  }\bibfield  {title} {\bibinfo {title} {On the {Koide}-like relations for the
  running masses of charged leptons, neutrinos and quarks},\ }\href
  {https://doi.org/10.1016/j.physletb.2006.02.051} {\bibfield  {journal}
  {\bibinfo  {journal} {Phys. Lett. B}\ }\textbf {\bibinfo {volume} {635}},\
  \bibinfo {pages} {107} (\bibinfo {year} {2006})},\ \Eprint
  {https://arxiv.org/abs/hep-ph/0602134} {arXiv:hep-ph/0602134} \BibitemShut
  {NoStop}%
\bibitem [{\citenamefont {Rodejohann}\ and\ \citenamefont
  {Zhang}(2011)}]{Rodejohann2011}%
  \BibitemOpen
  \bibfield  {author} {\bibinfo {author} {\bibfnamefont {W.}~\bibnamefont
  {Rodejohann}}\ and\ \bibinfo {author} {\bibfnamefont {H.}~\bibnamefont
  {Zhang}},\ }\bibfield  {title} {\bibinfo {title} {Extended empirical fermion
  mass relation},\ }\href {https://doi.org/10.1016/j.physletb.2011.03.007}
  {\bibfield  {journal} {\bibinfo  {journal} {Phys. Lett. B}\ }\textbf
  {\bibinfo {volume} {698}},\ \bibinfo {pages} {152} (\bibinfo {year}
  {2011})},\ \Eprint {https://arxiv.org/abs/1101.5525} {arXiv:1101.5525}
  \BibitemShut {NoStop}%
\bibitem [{\citenamefont {Kartavtsev}(2011)}]{Kartavtsev2011}%
  \BibitemOpen
  \bibfield  {author} {\bibinfo {author} {\bibfnamefont {A.}~\bibnamefont
  {Kartavtsev}},\ }\bibfield  {title} {\bibinfo {title} {A remark on the
  {Koide} relation for quarks},\ }\href@noop {} {\bibfield  {journal} {\bibinfo
   {journal} {arXiv e-print}\ } (\bibinfo {year} {2011})},\ \bibinfo {note}
  {unpublished preprint},\ \Eprint {https://arxiv.org/abs/1111.0480}
  {arXiv:1111.0480} \BibitemShut {NoStop}%
\bibitem [{\citenamefont {Gao}\ and\ \citenamefont {Li}(2015)}]{Gao2015}%
  \BibitemOpen
  \bibfield  {author} {\bibinfo {author} {\bibfnamefont {G.-H.}\ \bibnamefont
  {Gao}}\ and\ \bibinfo {author} {\bibfnamefont {N.}~\bibnamefont {Li}},\
  }\bibfield  {title} {\bibinfo {title} {Explorations of two empirical formulae
  for fermion masses},\ }\href@noop {} {\bibfield  {journal} {\bibinfo
  {journal} {arXiv e-print}\ } (\bibinfo {year} {2015})},\ \bibinfo {note}
  {unpublished preprint},\ \Eprint {https://arxiv.org/abs/1512.06349}
  {arXiv:1512.06349} \BibitemShut {NoStop}%
\bibitem [{\citenamefont {Rivero}\ and\ \citenamefont
  {Gsponer}(2005)}]{Rivero2005}%
  \BibitemOpen
  \bibfield  {author} {\bibinfo {author} {\bibfnamefont {A.}~\bibnamefont
  {Rivero}}\ and\ \bibinfo {author} {\bibfnamefont {A.}~\bibnamefont
  {Gsponer}},\ }\bibfield  {title} {\bibinfo {title} {The strange formula of
  {Dr. Koide}},\ }\href@noop {} {\bibfield  {journal} {\bibinfo  {journal}
  {arXiv e-print}\ } (\bibinfo {year} {2005})},\ \bibinfo {note} {unpublished
  preprint},\ \Eprint {https://arxiv.org/abs/hep-ph/0505220}
  {arXiv:hep-ph/0505220} \BibitemShut {NoStop}%
\bibitem [{\citenamefont {Brannen}(2010)}]{Brannen2010}%
  \BibitemOpen
  \bibfield  {author} {\bibinfo {author} {\bibfnamefont {C.~A.}\ \bibnamefont
  {Brannen}},\ }\href@noop {} {\bibinfo {title} {Spin path integrals and
  generations}} (\bibinfo {year} {2010}),\ \Eprint
  {https://arxiv.org/abs/1006.3114} {arXiv:1006.3114 [physics.gen-ph]}
  \BibitemShut {NoStop}%
\bibitem [{\citenamefont {Varma}(2026)}]{Varma2026}%
  \BibitemOpen
  \bibfield  {author} {\bibinfo {author} {\bibfnamefont {S.}~\bibnamefont
  {Varma}},\ }\bibfield  {title} {\bibinfo {title} {Unified fermion mass
  ratios from orbifold conformal field theory},\ }\href
  {https://doi.org/10.1142/S0217732326500732} {\bibfield  {journal} {\bibinfo
  {journal} {Mod. Phys. Lett. A}\ }\textbf {\bibinfo {volume} {41}},\ \bibinfo
  {pages} {2650073} (\bibinfo {year} {2026})}\BibitemShut {NoStop}%
\bibitem [{\citenamefont {{U.S. Department of Justice}}\ and\ \citenamefont
  {{Federal Trade Commission}}(2010)}]{DOJFTC2010}%
  \BibitemOpen
  \bibfield  {author} {\bibinfo {author} {\bibnamefont {{U.S. Department of
  Justice}}}\ and\ \bibinfo {author} {\bibnamefont {{Federal Trade
  Commission}}},\ }\href {https://www.justice.gov/media/810916/dl?inline}
  {\emph {\bibinfo {title} {Horizontal Merger Guidelines}}},\ \bibinfo {type}
  {Tech. Rep.}\ (\bibinfo  {institution} {U.S. Department of Justice and
  Federal Trade Commission},\ \bibinfo {year} {2010})\BibitemShut {NoStop}%
\bibitem [{\citenamefont {Xing}\ \emph {et~al.}(2008)\citenamefont {Xing},
  \citenamefont {Zhang},\ and\ \citenamefont {Zhou}}]{XingZhangZhou2008}%
  \BibitemOpen
  \bibfield  {author} {\bibinfo {author} {\bibfnamefont {Z.-z.}\ \bibnamefont
  {Xing}}, \bibinfo {author} {\bibfnamefont {H.}~\bibnamefont {Zhang}},\ and\
  \bibinfo {author} {\bibfnamefont {S.}~\bibnamefont {Zhou}},\ }\bibfield
  {title} {\bibinfo {title} {Updated values of running quark and lepton
  masses},\ }\href {https://doi.org/10.1103/PhysRevD.77.113016} {\bibfield
  {journal} {\bibinfo  {journal} {Phys. Rev. D}\ }\textbf {\bibinfo {volume}
  {77}},\ \bibinfo {pages} {113016} (\bibinfo {year} {2008})},\ \Eprint
  {https://arxiv.org/abs/0712.1419} {arXiv:0712.1419 [hep-ph]} \BibitemShut
  {NoStop}%
\bibitem [{\citenamefont {Koide}(1990)}]{Koide1990FamilyHiggs}%
  \BibitemOpen
  \bibfield  {author} {\bibinfo {author} {\bibfnamefont {Y.}~\bibnamefont
  {Koide}},\ }\bibfield  {title} {\bibinfo {title} {Charged lepton mass sum
  rule from {U}(3) family higgs potential model},\ }\href
  {https://doi.org/10.1142/S0217732390002663} {\bibfield  {journal} {\bibinfo
  {journal} {Mod. Phys. Lett. A}\ }\textbf {\bibinfo {volume} {5}},\ \bibinfo
  {pages} {2319} (\bibinfo {year} {1990})}\BibitemShut {NoStop}%
\bibitem [{\citenamefont {Koide}(2006)}]{Koide2006TripletHiggs}%
  \BibitemOpen
  \bibfield  {author} {\bibinfo {author} {\bibfnamefont {Y.}~\bibnamefont
  {Koide}},\ }\bibfield  {title} {\bibinfo {title} {Seesaw mass matrix model of
  quarks and leptons with flavor-triplet higgs scalars},\ }\href
  {https://doi.org/10.1140/epjc/s10052-006-0009-5} {\bibfield  {journal}
  {\bibinfo  {journal} {Eur. Phys. J. C}\ }\textbf {\bibinfo {volume} {48}},\
  \bibinfo {pages} {223} (\bibinfo {year} {2006})}\BibitemShut {NoStop}%
\bibitem [{\citenamefont {Koide}(2009)}]{Koide2009Yukawaon}%
  \BibitemOpen
  \bibfield  {author} {\bibinfo {author} {\bibfnamefont {Y.}~\bibnamefont
  {Koide}},\ }\bibfield  {title} {\bibinfo {title} {Charged lepton mass
  relations in a supersymmetric yukawaon model},\ }\href
  {https://doi.org/10.1103/PhysRevD.79.033009} {\bibfield  {journal} {\bibinfo
  {journal} {Phys. Rev. D}\ }\textbf {\bibinfo {volume} {79}},\ \bibinfo
  {pages} {033009} (\bibinfo {year} {2009})},\ \Eprint
  {https://arxiv.org/abs/0811.3475} {arXiv:0811.3475 [hep-ph]} \BibitemShut
  {NoStop}%
\end{thebibliography}
\end{document}